\newcommand{\ds}{\displaystyle}
\newtheorem{Definition}{Definition}
\newtheorem{Lemma}{\bf Lemma}
\newtheorem{Theorem}{\bf Theorem}
\newtheorem{Remark}{Remark}
\def\Pr{{\rm \mathbf {Pr}}}
\def\E{{\rm \mathbf  E}}
\newcommand{\V}{\mathcal{V}}
\newcommand{\R}{\mathbb{R}}
\newcommand{\C}{\mathbb{C}}
\newcommand{\tDelta}{\tilde{\Delta}}
\newcommand{\hDelta}{\hat{\Delta}}
\newcommand{\src}{\theta}
\newcommand{\vtx}[3]{
\begin{picture}(2.5,0.8)
  \put(0,0){\vector(1,0){0.95}}
  \put(0.25,0.1){\makebox(0,0)[b]{#1}}
 
  \put(1.25,0){\circle{0.6}}
  \put(1.25,0){\makebox(0,0){#2}}
 
  \put(1.55,0){\line(1,0){0.95}}
  \put(2.25,0.1){\makebox(0,0)[b]{#3}}
\end{picture} }
\newcommand{\vtxud}[3]{
\begin{picture}(2.5,1.5)
  \put(0,0){\vector(1,1){0.95}}
  \put(0.35,0.45){\makebox(0,0)[br]{#1}}
 
  \put(1.25,1){\circle{0.6}}
  \put(1.25,1){\makebox(0,0){#2}}
 
  \put(1.55,0.95){\vector(1,-1){0.95}}
  \put(2.15,0.45){\makebox(0,0)[bl]{#3}}
\end{picture} }
\newcommand{\vtxudp}[3]{
\begin{picture}(2.5,1.5)
  \put(0,0){\vector(1,1){0.95}}
  \put(0.45,0.45){\makebox(0,0)[tl]{#1}}
 
  \put(1.25,1){\circle{0.6}}
  \put(1.25,1){\makebox(0,0){#2}}
 
  \put(1.55,0.95){\vector(1,-1){0.95}}
  \put(2.15,0.45){\makebox(0,0)[bl]{#3}}
\end{picture} }
\newcommand{\vtxudq}[3]{
\begin{picture}(2.5,1.5)
  \put(0,0){\vector(1,1){0.95}}
  \put(0.35,0.45){\makebox(0,0)[br]{#1}}
 
  \put(1.25,1){\circle{0.6}}
  \put(1.25,1){\makebox(0,0){#2}}
 
  \put(1.55,0.95){\vector(1,-1){0.95}}
  \put(2.15,0.45){\makebox(0,0)[tr]{#3}}
\end{picture} }
\newcommand{\vtxe}[2]{
\begin{picture}(1.6,0.8)
  \put(0,0){\vector(1,0){0.95}}
  \put(0.25,0.1){\makebox(0,0)[b]{#1}}
 
  \put(1.25,0){\circle{0.6}}
  \put(1.25,0){\makebox(0,0){#2}}
\end{picture} }
\begin{document}

\title{Age Distribution in Arbitrary Preemptive Memoryless Networks} 
\author{%
Rajai Nasser, Ibrahim Issa, and Ibrahim Abou-Faycal
\thanks{This paper is submitted in part to the IEEE International Symposium on Information Theory (ISIT) 2022.}
}

\maketitle

\normalem

\begin{abstract}
    We study the probability distribution of age of information (AoI) in arbitrary networks with memoryless service times. A source node generates packets following a Poisson process, and then the packets are forwarded across the network in such a way that newer updates preempt older ones. This model is equivalent to gossip networks that was recently studied by Yates, and for which he obtained a recursive formula allowing the computation for the average AoI. In this paper, we obtain a very simple characterization of the stationary distribution of AoI at every node in the network. This allows for the computation of the average of an arbitrary function of the age. In particular, we can compute age-violation probabilities. Furthermore, we show how it is possible to use insights from our simple characterization in order to substantially reduce the computation time of average AoIs in some structured networks. Finally, we describe how it is possible to use our characterization in order to obtain faster and more accurate Monte Carlo simulations estimating the average AoI, or the average of an arbitrary function of the age.
\end{abstract}

\section{Introduction}


Many new technologies (e.g., vehicular networks, sensor networks, IoT applications, etc \ldots) rely on the assumption that the information that a node in a network has about other nodes is as fresh as possible. Timeliness of updates thus emerged as a new research topic in the study of networks. While optimizing for utilization and/or delay (latency) can be correlated with getting better timeliness at the receiver, such strategies do not necessarily optimize timeliness, even in very simple settings \cite{KaulYatesGruteser-2012Infocom}.

Age of information (AoI) \cite{KaulYatesGruteser-Globcom2011} is a metric that better captures the concept of updates' timeliness. Consider a network node that monitors the status of another network node. The monitor will continuously receive updates from the monitored node -- henceforth referred to as transmitter -- and the goal is to keep the information that the monitor has about the transmitter as fresh as possible. For every instant of time $t\geq 0$, let $g(t)$ be the timestamp of the most recent update that the monitor has (successfully) received from the transmitter. The instantaneous age of information at the monitor at time $t$ is defined as
$$\Delta(t)=t-g(t)\,,$$
and the average age of information is defined as
$$\Delta = \lim_{\tau\to\infty} \frac{1}{\tau} \int_0^\tau \Delta(t) \, dt\,.$$

In Fig.~\ref{fig:fig_instantaneous_age_general}, we show an example illustrating how the instantaneous age varies with time. In this figure, $t_i'$ represents the instant at which the $i$-th successfully received message was decoded at the receiver, and $t_i=g(t_i')$ represents the generation time of this message at the source.

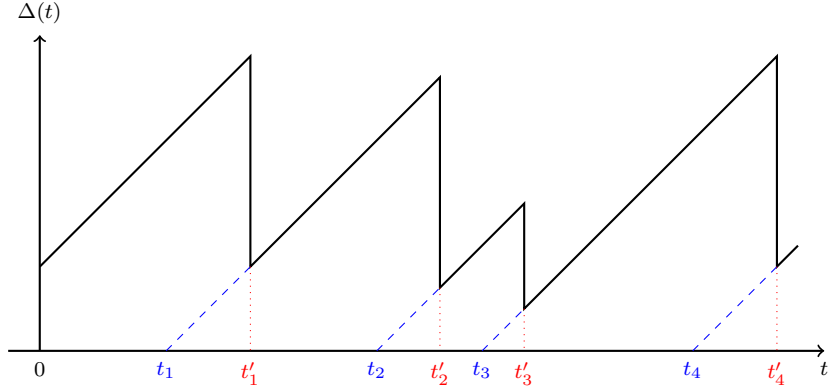
\begin{figure}[ht]
\centering
	\begin{tikzpicture}[scale=0.7,font=\scriptsize]
		\draw[thick,->] (-2,0) -- (13.5,0) node[anchor=north] {$t$};

		\draw[color=blue] 	(1,0) node [anchor=north] {$t_1$};
		\draw[color=red] 	(2.6,0) node [anchor=north] {$t_1'$};
		\draw[color=blue] 	(5,0) node [anchor=north] {$t_2$};
		\draw[color=red] 	(6.2,0) node [anchor=north] {$t_2'$};
		\draw[color=blue] 	(7,0) node [anchor=north] {$t_3$};
		\draw[color=red] 	(7.8,0) node [anchor=north] {$t_3'$};
		\draw[color=blue] 	(11,0) node [anchor=north] {$t_4$};
		\draw[color=red] 	(12.6,0) node [anchor=north] {$t_4'$};
		
		\draw[thick,->] (-1.4,0) node[anchor=north] {$0$}-- (-1.4,6) node[anchor=south] {$\Delta(t)$};
		
		\draw[thick] (-1.4,1.6) -- (2.6,5.6) -- (2.6,1.6) -- (6.2,5.2) -- (6.2,1.2) -- (7.8,2.8) -- (7.8,0.8) --
		(12.6,5.6) -- (12.6,1.6) -- (13,2);
		
		\draw[dashed,color=blue] (1,0) -- (2.6,1.6);
		\draw[dashed,color=blue] (5,0) -- (6.2,1.2);
		\draw[dashed,color=blue] (7,0) -- (7.8,0.8);
		\draw[dashed,color=blue] (11,0) -- (12.6,1.6);

		\draw[dotted,color=red] (2.6,1.6) -- (2.6,0);
		\draw[dotted,color=red] (6.2,1.2) -- (6.2,0);
		\draw[dotted,color=red] (7.8,0.8) -- (7.8,0);
		\draw[dotted,color=red] (12.6,1.6) --(12.6,0);
	\end{tikzpicture}
	\caption{The instantaneous age $\Delta(t)$.}
	\label{fig:fig_instantaneous_age_general}
\end{figure}

A large number of papers studied the AoI from a queuing-theoretic perspective. Packets are generated at the transmitter with independent (random) interarrival times, and these packets (i.e., updates) are transmitted to the monitor through one link\footnote{In the queuing theory terminology, the link can be thought of as a server.}. The time needed for the packet to be successfully received at the monitor is called the service time, and is modelled as a random variable. By specifying the probability distributions of packet interarrival time and of the service time, we obtain different queuing-theoretic models (e.g., M/M/1, G/M/1, G/G/1, etc \ldots). The AoI was studied under these queuing models for various transmission scheduling policies (FCFS, LCFS, LCFS with preemption in service, LCFS with preemption in the waiting queue, etc \ldots). See e.g., \cite{KaulYatesGruteser-2012Infocom,2012CISS-KaulYatesGruteser,CostaCodreanuEphremides2014ISIT,KamKompellaEphremides2013ISIT,NajmNasser-ISIT2016,YatesKaul-2012ISIT,InoueEtAl,SoysalUlukusGG11,YatesKaulTrans2019,NajmTelatar2018}.

The age of information problem has also been studied under resource allocation constraints, such as energy in~\cite{Elif-2015ITA,Yates-2015ISIT,ArafaUlukus17,BacinogluUysal-ISIT17,BacinogluSunUysalMutlu-ISIT18,WuYangWu18,FengYangTransComm2021,ArafaYangUlukusPoorTrans2020,ArafaYangUlukusPoorAllerton2018,ArafaYangUlukusPoorISIT2019,LiuEtAlTransIt2016}. Other staleness metrics that have been studied include peak age \cite{CostaCodreanuEphremides2014ISIT,CostaCodreanuEphremides2016,InoueEtAl} and age-violation probability \cite{BedewySunShroffTrans2019}, among others. For an excellent recent survey about AoI, see \cite{YatesEtAlJSACSurvey}.

In this paper, we study the age of information problem in a networking setting. In \cite{BedewySunShroff17}, Bedewy \emph{et al.} showed that if a source transmits status updates in an arbitrary multihop network where all links have exponentially distributed service times, then the LCFS policy with preemption in service is age optimal. In \cite{YatesPreempServs2018}, Yates used the stochastic hybrid systems (SHS) formalism \cite{YatesKaulTrans2019,TeelSubbaramanSferlazza2014,Hespanha2006} to derive an exact formula for the average AoI in a \emph{line network} following the LCFS policy with preemption in service. The stationary distribution of AoI in line networks was later derived by Yates in \cite{YatesAoIMomTrans2018} by studying its moment-generating function (MGF). In \cite{YatesGossipISIT2021,YatesGossipIWSPAWC2021}, Yates extended the result of \cite{YatesPreempServs2018} to arbitrary multihop (gossip) networks.

Yates predicted in \cite{YatesGossipIWSPAWC2021} that the techniques of \cite{YatesAoIMomTrans2018}, namely MGF, can be used to derive distributional properties of the age of information in arbitrary gossip networks. In this paper, we confirm Yates' prediction by deriving the stationary distribution of the age of information in arbitrary networks where all nodes follow a preemptive policy in service. While we do use the moment-generating function to achieve this, our approach departs from that of \cite{YatesAoIMomTrans2018} in that we do not use the stochastic hybrid systems formalism. 

Since we are able to exactly compute the AoI distribution, our results allow for the computation of many other AoI-related staleness metrics. For example, we can compute the average of an arbitrary function of the age. In particular, we can compute the age-violation probability.

It is worth mentioning that the characterization that we get for the AoI distribution has a very simple form. The simplicity of this characterization can be leveraged to obtain faster and more accurate Monte Carlo simulations estimating the average AoI (or the average of any function of the age). Finally, if the network is structured, we can leverage this structure in addition to the simplicity of our characterization in order to reduce the computation time of the AoI distribution or the average AoI.

\section{Main Result}

\label{sec:MainResult}

\begin{Definition}[Single-Source Network] \label{def:graph}
Let $G=(V,E)$ be a weighted directed graph where $V$ is the set of vertices and $E$ is the set of edges. We say that $G$ is a Single-Source Network (SSN) if:
\begin{itemize}
    \item It has a unique node with in-degree zero. We call it \emph{source} and denote it by $\src$.
    \item All nodes are reachable from the source $\src$.
    \item It has no self loops.
\end{itemize}
\end{Definition}

\paragraph{{\bf Model}} 
To model packet transmission through an SSN represented by a graph $G$, we assume that the source node generates packets according to a Poisson process of rate $\lambda$. Each node $v\in V$ represents a buffer of capacity 1, and each edge $e \in E$ represents a queue with exponentially distributed service of rate $\mu_e$ (i.e., ./M/1 queue), where $\mu_e$ is the corresponding weight of the edge $e$. For every node $v \in V$ and every instant of time $t \geq 0$, let $g_v(t)$ denote the generation time of the freshest packet that the node $v$ has received. Define the age process $\Delta_v(t)$ at node $v$ in the standard way: \[ \Delta_v(t) = t- g_v(t)\,. \]
Each node transmits the packet in its buffer (if it exists) through all its outgoing edges\footnote{The node aims to deliver the packet through \emph{all} connected servers. We may assume that the packet, even after successful transmission, remains in the buffer until it is explicitly preempted.}, and implements a preemption policy in service. More precisely, if the service corresponding to an edge $(u,v) \in E$ terminates, then the node $v$ compares $g_v(t)$ with $g_u(t)$ (which is the generation time of the packet just received); if $g_u(t) > g_v(t)$, the received packet is newer than the existing one, hence $v$ preempts its own packet and starts transmitting the new packet instead; otherwise (i.e., if $g_u(t) \leq g_v(t)$), the received packet is ignored. Finally, note that the (random) service times are mutually independent.

\paragraph{{\bf Notation}}
The notation $S \sim \mathrm{Exp}(\mu)$ is used to indicate that $S$ is a random variable with exponential distribution of rate $\mu\geq 0$, i.e., $S$ is a continuous random variable with pdf given by $p_{S}(t) = \mu e^{-\mu t}$, $t \geq 0$. For $u, v \in V$, we write $\mathcal{P}(u \to v)$ to denote the set of (directed) paths in $G$ from $u$ to $v$.


Our main theorem provides a computable characterization of the average age of information by deriving the stationary distribution of each age process.

\begin{Theorem} \label{thm:main}
Consider an SSN graph $G=(V,E)$ representing a single-source network. For each $v \in V$, let $\Delta_v(t)$ represent the age process at node $v$ at time $t$. Define the random variables $(\tilde{\Delta}_v)_{v \in V}$ as follows:
\begin{align}
    \tilde{\Delta}_{\src} & \sim \mathrm{Exp}(\lambda)\,, \label{eq:thm-main-delta-s} \\
    \tilde{\Delta}_v & = \tilde{\Delta}_{\src}+ \min_{P \in \mathcal{P}(\src \rightarrow v)} \sum_{e \in P} S_e,\quad \forall v\neq {\src}\,, \label{eq:thm-main-delta-v} 
\end{align}
where $S_{e} \sim \mathrm{Exp}(\mu_{e})$, and the random variables $\{\tDelta_{\src},(S_{e})_{e \in E}\}$ are mutually independent. 
For each $v \in V$, let $\Delta_v$ be a random variable distributed according to the stationary distribution\footnote{It is worth noting that the AoIs process $\big((\Delta_v(t))_{v\in V}\big)_{t\geq 0}$ is ergodic, and hence they have a stationary distribution.} of $\Delta_v(t)$. Then 
\begin{equation}
\Delta_v = \tilde{\Delta}_v\text{ in distribution.}\label{eq:thm-main-eq-dist} 
\end{equation}
In particular, the average age at the destination is given by
\begin{align} \label{eq:thm-main-age}
    \E[\Delta_d] = \frac{1}{\lambda} + \E \left[ \min_{P \in \mathcal{P}({\src} \rightarrow d)} \sum_{e \in P} S_e \right].
\end{align}
\end{Theorem}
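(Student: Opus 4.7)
My plan is to describe $\Delta_v(t)$ pathwise in terms of the underlying Poisson clocks, pass to reverse time to obtain a first-passage problem on the reverse graph $G^R$, and identify that first-passage time in distribution with $\min_{P\in\mathcal{P}(\src\to v)}\sum_{e\in P}S_e$. First I would rewrite the dynamics in the standard memoryless form: each edge $e\in E$ carries an independent Poisson clock of rate $\mu_e$ whose tick on $(u,v)$ performs the update $g_v\leftarrow\max(g_v,g_u)$, while the source clock of rate $\lambda$ sets $g_{\src}\leftarrow t$. A direct induction on these dynamics then yields the pathwise identity: $g_v(t)$ is the largest source event time $\sigma$ such that some directed path $\src=v_0,v_1,\ldots,v_k=v$ admits a sequence of edge firings $\sigma\le\tau_1\le\cdots\le\tau_k\le t$ with $\tau_i$ a firing on $(v_{i-1},v_i)$; equivalently, $\Delta_v(t)=\min\{t-\sigma:\sigma\text{ valid above}\}$.

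Passing to reverse time $\rho=t-\tau$ at a stationary instant, time-reversibility makes the reverse source process and the reverse edge processes independent Poisson processes with the same rates. For a fixed path $P=(\src,v_1,\ldots,v_k=v)$, the smallest feasible reverse traversal $\rho_1^{(P)}$ is built greedily (first reverse firing on $(v_{k-1},v_k)$ after reverse time $0$, then first reverse firing on $(v_{k-2},v_{k-1})$ after that, and so on), so the overall minimum $R:=\min_P\rho_1^{(P)}$ is exactly the first-passage time to reach $\src$ from $v$ in $G^R$ with Poisson-clocked edges. The remaining minimisation over $\sigma$ amounts to waiting for the first reverse source event at reverse time $\ge R$, which by memorylessness of the source process and its independence of the edge processes is $\mathrm{Exp}(\lambda)$ independent of $R$. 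Hence $\Delta_v\stackrel{d}{=}R+E$ with $E\sim\mathrm{Exp}(\lambda)$ independent of $R$.

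It remains to identify $R\stackrel{d}{=}\min_{P\in\mathcal{P}(\src\to v)}\sum_{e\in P}S_e$ with i.i.d.\ $S_e\sim\mathrm{Exp}(\mu_e)$. Let $T_w$ denote the first-passage time from $v$ to $w$ in $G^R$ and, for each $e=(u,w)\in E$, set $\hat{S}_e:=\inf\{\rho\ge T_w:\rho\text{ a firing on }e\}-T_w$. The reverse edge in $G^R$ corresponding to $e$ can only become active once $w$ is reached, so the Poisson clock of $e$ is never consulted in determining $T_w$; $T_w$ is thus a stopping time for a filtration that excludes that clock, and the strong Markov property yields $\hat{S}_e\sim\mathrm{Exp}(\mu_e)$ independently of this filtration. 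Iterating edge by edge shows that $\{\hat{S}_e\}_{e\in E}$ is a jointly independent family. The Bellman identity $T_u=\min_{w:(u,w)\in E}(T_w+\hat{S}_{(u,w)})$, together with a simple induction on the order in which nodes are reached, gives $T_u=\min_{P\in\mathcal{P}(u\to v,\,G)}\sum_{e\in P}\hat{S}_e$; specialising to $u=\src$ produces $R\stackrel{d}{=}\min_P\sum_{e\in P}S_e$. Combining with the previous paragraph yields $\Delta_v\stackrel{d}{=}\tilde{\Delta}_{\src}+\min_P\sum_{e\in P}S_e=\tilde{\Delta}_v$, and \eqref{eq:thm-main-age} is immediate by linearity of expectation.

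The main obstacle will be the joint-independence step above: one has to track carefully which Poisson clocks each stopping time $T_w$ has consulted so that the strong Markov property can be applied edge by edge to produce a truly i.i.d.\ family $\{\hat{S}_e\}$. Once this bookkeeping is in place, every other step is either standard (time-reversibility and memorylessness of Poisson processes) or elementary (the pathwise unfolding of the dynamics).
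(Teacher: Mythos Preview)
Your approach is correct and genuinely different from the paper's. The paper works analytically with moment-generating functions: it introduces the auxiliary variables $\Delta_A=\min_{v\in A}\Delta_v$ for all nonempty $A\subseteq V'$, derives a recursion expressing $\E[e^{s\Delta_A}]$ in terms of $\E[e^{s\Delta_{A\cup\{u\}}}]$ from an event-by-event description of the dynamics (Lemmas~\ref{lem:MGF-Relations} and~\ref{lem:MGF-Relations-recur}), and then checks by reverse induction on $|A|$ that $\tilde\Delta_A$ obeys the same recursion, the inductive step being a memorylessness argument that conditions on which boundary edge of $A$ carries the minimum weight. Your argument instead unfolds the dynamics pathwise, reverses time in stationarity to convert $\Delta_v$ into a first-passage time on $G^R$ plus an independent exponential, and identifies that first-passage time with the exponential-weight shortest path. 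The paper's route yields the stronger Theorem~\ref{thm:statDistSubsets} (the identity for all subsets $A$, not just singletons) and produces as a by-product the explicit MGF recursion that recovers Yates' formula; your route is more direct and explains \emph{why} the shortest-path structure emerges.

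One caution on the step you rightly flag as the obstacle. That the clock of $e=(u,w)$ is never consulted in determining $T_w$ is correct and gives the marginal law $\hat S_e\sim\mathrm{Exp}(\mu_e)$ independent of $T_w$; but ``iterating edge by edge'' does not obviously give \emph{joint} independence, because for another edge $e'=(u',w')$ the stopping time $T_{w'}$ can depend on the clock of $e$ (whenever $u$ lies on a $G^R$-path to $w'$), so $\hat S_{e'}$ is not manifestly independent of $\hat S_e$ via a single strong-Markov application. A cleaner way to close this is to note that in both the Poisson-clock model and the i.i.d.-exponential-weight model the infected set $(I_t)_{t\ge0}$ is a continuous-time Markov chain with the \emph{same} transition rates (from state $I$, node $u\notin I$ joins at rate $\sum_{w\in I:(u,w)\in E}\mu_{uw}$, by memorylessness in either model), so the first-passage times coincide in law; this sidesteps proving joint independence of the $\hat S_e$ directly, though your stronger claim is in fact true.
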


\begin{Remark}
We emphasize that even though the marginal distributions corresponding to individual random variables in $(\Delta_v)_{v\in V}$ agree with the marginal distributions of the corresponding random variables in $(\tilde{\Delta}_v)_{v \in V}$, the random variables $(\Delta_v)_{v\in V}$ and $(\tilde{\Delta}_v)_{v \in V}$ do not have the same joint distribution. In order to see why, consider two vertices $u,v$ that are connected by an edge $(u,v)\in E$. Imagine that at some time $t_0$ the node $u$ succeeds in delivering a new update to $v$ so that $\Delta_v(t_0)=\Delta_u(t_0)$. Shortly after $t_0$, both $\Delta_v(t)$ and $\Delta_u(t)$ will increase linearly, and hence we will have $\Delta_v(t)=\Delta_u(t)$, until either $u$ or $v$ receives a new update. Since this can happen with nonzero probability, we can see that we have $\Delta_v(t)=\Delta_u(t)$ on a nonzero fraction of the (positive) real line $t\in\R^+$, which means that in the stationary distribution we must have $\Delta_u=\Delta_v$ with nonzero probability. On the other hand, from the definition of $\tilde{\Delta}_u$ and $\tilde{\Delta}_v$, we can easily see that $\tilde{\Delta}_u=\tilde{\Delta}_v$ happens with zero probability.
\end{Remark}

Before proving Theorem \ref{thm:main}, it will be notationally convenient to augment the graph $G$ by adding a virtual node $\src'$ to $V$ and an edge $(\src',\src)$ to $E$. More precisely, let
\begin{align} \label{eq:augment-graph}
    V'=V\cup\{\src'\}\ \text{ and } E'=E\cup\{(\src',\src)\}\,,
\end{align}
and associate to the edge $(\src',\src)$ the rate $$\mu_{\src'\src}:=\lambda\,$$ and the random variable $$S_{\src'\src} \sim \mathrm{Exp}(\lambda)\,.$$ Let $G'=(V',E')$ be the augmented graph.
For the new node $\src'$, let $$\Delta_{\src'}(t)=0\text{ for all }t \geq 0\,,$$ and assume that $\src'$ continuously transmits new packets, i.e., once a packet is received from $\src'$ to $\src$, $\src'$ generates and starts transmitting a new packet. Since the associated service rate of the edge $(\src,\src')$ is exponential with rate $\lambda$, this is equivalent to $\src$ producing packets according to a Poisson process of rate $\lambda$. In this context, Equations~\eqref{eq:thm-main-delta-s} and~\eqref{eq:thm-main-delta-v} can be rewritten as
\begin{align}
    \tilde{\Delta}_{\src'} & = 0\,, \label{eq:thm-main-delta-s-2} \\
    \tilde{\Delta}_v & = \min_{P \in \mathcal{P}(\src' \rightarrow v)} \sum_{e \in P} S_e, \text{ for } v \neq \src'\,.\label{eq:thm-main-delta-v-2} 
\end{align}

For ease of notation, we will define
\begin{equation}
    \V'=\{A\subseteq V':\; A \neq \varnothing\} \text{ and } \V = \{A \subseteq \V:\; A \neq \varnothing \},
\end{equation}
to be the collections of nonempty subsets of $V'$ and $V$, respectively. Note that $\V \subseteq \V'$. Now for every $A \in \V'$, define 
$$\Delta_A = \min_{v \in A} \Delta_v\qquad\text{and}\qquad\tilde{\Delta}_A = \min_{v \in A} \tilde{\Delta}_v\,,$$
and for every $t\geq 0$, define
$$\Delta_A(t) = \min_{v \in A} \Delta_v(t)\,.$$
Note that $\Delta_A$ corresponds to the stationary distribution of the stochastic process $(\Delta_A(t))_{t\geq 0}$.

We will prove a stronger statement than that of Theorem~\ref{thm:main}:

\begin{Theorem} \label{thm:statDistSubsets}
For all $A \in \V'$, we have
\begin{align}
    \Delta_A = \tDelta_A \text{ in distribution}\,.
\end{align}
\end{Theorem}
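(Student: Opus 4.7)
The plan is to characterize the distributions of $\Delta_A$ and $\tilde{\Delta}_A$ via their Laplace transforms $\phi_A(s) := \E[e^{-s\Delta_A}]$ and $\tilde{\phi}_A(s) := \E[e^{-s\tilde{\Delta}_A}]$, show that both families satisfy the same recursive functional equation in $A$ with the same base case, and conclude by uniqueness of the Laplace transform on $[0,\infty)$.

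\textbf{Step 1 (dynamics side).} If $\src' \in A$, then $\Delta_A \le \Delta_{\src'} = 0$, so $\phi_A(s) = 1$. If $\src' \notin A$, I view $(\Delta_v(t))_{v \in V}$ as a piecewise-deterministic Markov process and apply its infinitesimal generator to the test function $e^{-s\Delta_A(\cdot)}$. Between firings, every $\Delta_v$, and hence $\Delta_A$, drifts at unit rate, contributing a $-s\,\phi_A(s)$ term. A firing of edge $(u,w) \in E'$ (which includes the virtual edge $(\src',\src)$ of rate $\lambda$) sets $\Delta_w \leftarrow \min(\Delta_w,\Delta_u)$; this alters $\min_{v \in A}\Delta_v$ precisely when $w \in A$ and $u \notin A$, in which case $\Delta_A$ jumps to $\Delta_{A\cup\{u\}}$. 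Setting the expected generator of $e^{-s\Delta_A}$ to zero in stationarity yields
\[
\bigl(s + \mu_{\mathrm{in}}(A)\bigr)\phi_A(s)
= \sum_{\substack{(u,w)\in E'\\ w\in A,\,u\notin A}} \mu_{uw}\,\phi_{A\cup\{u\}}(s),
\qquad
\mu_{\mathrm{in}}(A) := \sum_{\substack{(u,w)\in E'\\ w\in A,\,u\notin A}}\mu_{uw}.
\]

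\textbf{Step 2 (combinatorial side).} The variable $\tilde{\Delta}_A = \min_{v\in A}\tilde{\Delta}_v = \min_{P \in \bigcup_{v\in A}\mathcal{P}(\src' \to v)} \sum_{e\in P} S_e$ is a first-passage time from $\src'$ to $A$ in $G'$ with independent exponential edge weights; reversing every arc, it equals the first-passage time from $A$ to $\src'$ in the reversed graph. Pooling the edges $(u,w)$ with $w \in A$, $u \notin A$ and treating their weights $\{S_{uw}\}$ as independent exponential clocks, a standard exponential-race argument gives that the first to fire waits $W \sim \mathrm{Exp}(\mu_{\mathrm{in}}(A))$, independently of the identity $(u,w)$ of the winning edge, which is selected with probability $\mu_{uw}/\mu_{\mathrm{in}}(A)$. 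By memorylessness, conditional on this first firing the residual first-passage problem is a fresh one rooted at $A\cup\{u\}$, so it is distributed as $\tilde{\Delta}_{A\cup\{u\}}$ and independent of $W$. Writing $\tilde{\Delta}_A = W + \tilde{\Delta}_{A\cup\{U\}}$ and taking Laplace transforms reproduces exactly the recursion from Step~1, with matching base case $\tilde{\phi}_A(s) = 1$ when $\src' \in A$.

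\textbf{Step 3 (conclusion and main obstacle).} Because every summand on the right-hand side involves a strict superset of $A$, induction on $|V'| - |A|$ (base case $A = V'$, where $\src' \in A$ forces the value $1$) shows that the above recursion has a unique solution, and therefore $\phi_A = \tilde{\phi}_A$ for all $A \in \V'$, which gives the claimed distributional identity. I expect the delicate step to be making Step~1 fully rigorous: one must justify a Dynkin-type identity for the piecewise-deterministic Markov process governing the age vector, taking care that although the generator acts on the full state $(\Delta_v)_{v\in V}$, the only information needed about $(\Delta_v)_{v \in V}$ to track the jumps of $e^{-s\Delta_A}$ is the value of $\Delta_{A'}$ for supersets $A' \supseteq A$. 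The ergodicity needed to invoke stationarity is already asserted in the footnote to Theorem~\ref{thm:main}. Once Step~1 is in place, the exponential-race / first-passage identification in Step~2 is routine.
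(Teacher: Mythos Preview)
Your proposal is correct and follows essentially the same strategy as the paper: derive a recursion for the transform of $\Delta_A$ indexed by supersets $A\cup\{u\}$, show that $\tilde\Delta_A$ satisfies the identical recursion via an exponential-race (first-passage) argument, and conclude by reverse induction on $|A|$ starting from sets containing $\src'$. The only noteworthy difference is tactical, in Step~1: you invoke the generator of the piecewise-deterministic Markov process and set $\E[\mathcal L\,e^{-s\Delta_A}]=0$ in stationarity, whereas the paper builds the superposed Poisson clock of rate $\mu=\sum_{e}\mu_e$ explicitly, passes through an auxiliary transform $f$ evaluated at the jump times $B_n$, and obtains the same recursion by averaging the update rule~\eqref{eq:updateGivenEnEqualeA} over the thinned edge identity (Lemmas~\ref{lem:MGF-Relations} and~\ref{lem:MGF-Relations-recur}). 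Your route is shorter and more conceptual, at the cost of requiring a Dynkin-type identity for the PDMP; the paper's route is more elementary but longer. Your Step~2 is exactly the paper's Lemma~\ref{lem:age-by-path} plus the memorylessness argument around equations~\eqref{eq:tdelta-rewritten}--\eqref{eq:deltaAtoAu}.
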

In order to show Theorem \ref{thm:statDistSubsets}, we will prove that the moment-generating functions of $\Delta_A$ and $\tDelta_A$ are equal, i.e., for all $s \in \C$ for which $\E \left[ e^{s \Delta_A} \right]$ exists, we have
\begin{align} \label{eq:main-eq-MGF}
    \E \left[ e^{s \Delta_A} \right] =  \E \left[ e^{s \tDelta_A} \right]\,.
\end{align}
Now to prove \eqref{eq:main-eq-MGF}, we will first study the (joint) moment-generating function of the collection of random variables $(\Delta_A)_{A\in\V'}$. Namely, we will study the function defined as
\begin{align} \label{eq:def-MGF}
F\big((s_A)_{A\in \V' }\big) :=   \E \left[ \exp\left(\sum_{A\in \V' }s_A \Delta_A\right) \right]\,,
\end{align}
where we associate with each $A\in\V'$ a complex variable $s_A\in\C$.

%

\subsection{Description of the evolution of the age processes}
In order to study $F$, it will be useful to describe an equivalent formulation of $$(\Delta_A(t))_{A\in\V',~t\geq 0}\,,$$ which is particularly convenient for our analysis. 
Since $\displaystyle\Delta_A(t) = \min_{v \in A} \Delta_v(t)$,
it is sufficient to describe how we can get $\Delta_v(t)$ for $v\in V'$ and $t\geq 0$. For $v=\src'$, recall that
\begin{equation}
    \Delta_{\src'}(t) = 0\,,\quad \forall t\geq 0\,.
\end{equation}
In the following, we will describe how we can get $\Delta_v(t)$ for $v\in V$ and $t\geq 0$. The idea is that the ages grow linearly at each node, until the service at some edge $(u,v)$ terminates. At that point, the age at $v$ (and all sets $A$ containing $v$) may need to be updated. 

To make the previous description precise, let $(T_n)_{n\geq 1}$ be a sequence of i.i.d. random variables distributed as  $\displaystyle T \sim \min_{e \in E'} S_e$, i.e.,  $T\sim \mathrm{Exp}(\mu)$ where $\mu$ is given by 
\begin{align} \label{eq:total-rate}
  \mu = \sum_{e\in E'} \mu_e\,.  
\end{align}
This defines a Poisson process of rate $\mu$. Let $B_0 = 0$ and for every $n\geq 1$ define
\begin{equation}
B_n = \sum_{i=1}^n T_i\,.
\end{equation}
Note that $B_n$ is the time at which the $n$-th event of the Poisson process $(T_n)_{n\geq 1}$ occurs. That is, the service at some edge $e$ has terminated. To associate this event with a specific edge, we will (implicitly) apply a thinning operation to the Poisson process $(T_n)_{n\geq 1}$ in order to (implicitly) get $|E'|$ Poisson processes of rates $(\mu_e)_{e\in E'}$, respectively.
Independently of $(T_n)_{n\geq 1}$, generate a sequence of i.i.d. random variables $(E_n)_{n\geq 1}$ taking values in $E'$ in such a way that for every $e\in E'$, we have
\begin{equation}
    \Pr[E_n = e] = \frac{\mu_e}{\mu}\,,
\end{equation}
which corresponds to the probability that $\displaystyle S_e = \min_{e' \in E'} S_{e'}$.
If $E_n=e$, then at $t=B_n$, the service time corresponding to the edge $e$ has just finished. 

Given the processes $(T_n)_{n \geq 1}$ and $(E_n)_{n \geq 1}$, we are ready to describe the evolution of the age processes. 
For every $n\geq 1$, every $B_{n-1}\leq t<B_n$ (i.e., no events occurring), and every $v\in V$, we set
\begin{equation}
    \Delta_v(t) = \Delta_v(B_{n-1})+t-B_{n-1}\,.
\end{equation}
That is, all the ages $(\Delta_v(t))_{v\in V}$ increase linearly on every interval $[B_{n-1},B_n)$. 
It remains to describe how to get $(\Delta_v(B_n))_{v\in V}$ for $n\geq 0$. To that end, we set $(\Delta_v(B_0))_{v\in V}=(\Delta_v(0))_{v\in V}$ to arbitrary values\footnote{Due to the ergodicity of the system, the function $F$ in \eqref{eq:def-MGF} is insensitive to the initial values $(\Delta_v(0))_{v\in V}$.} and then for every $n\geq 1$, we define $(\Delta_v(B_n))_{v\in V}$ in terms of $(\Delta_v(B_{n-1}))_{v\in V}$ as follows: If $E_n=(u,w)$, then for every $v\in V$, we set
\begin{align}  
\Delta_v(B_n)=
\begin{cases}
\min\big\{\Delta_{u}(B_{n-1}),\Delta_{w}(B_{n-1})\big\}+T_n\,, & \text{if } v=w\neq\src\,, \\
0\,, & \text{if } v =w=\src\,,\\
\Delta_{v}(B_{n-1})+T_n\,, & \text{if } v \neq w\,.
\end{cases} \label{eq:updateGivenEnEquale}
\end{align}
Consequently, for every $A \in \V$, we have
\begin{align}
\Delta_A(B_n) = 
\begin{cases}
\Delta_{A \cup \{u\}} (B_{n-1}) + T_n\,, &  \text{if } \src\neq w \in A\,, \\
0\,, & \text{if } \src=  w \in A\,,\\
\Delta_{A} (B_{n-1}) + T_n\,, & \text{if } w \notin A\,.\\
\end{cases} \label{eq:updateGivenEnEqualeA} 
\end{align}


%

\subsection{Recursive relations for the MGF} We are now ready to study $F$ in~\eqref{eq:def-MGF}. Note that by ergodicity we may rewrite $F$ as 
\begin{align} \label{eq:MGF-erg}
F\left((s_A)_{A\in\V'}\right) =\lim_{\tau\to\infty}\frac{1}{\tau}\int_{0}^{\tau} \exp\left(\sum_{A\in\V'}s_A \Delta_A(t)\right) dt\,.
\end{align}
Roughly speaking, we first study $F$ at ``critical points'' at which events occur. To that end, consider the following lemma.
\begin{Lemma} \label{lem:MGF-Relations}
Define 
\begin{align} \label{eq:MGF-critical}
  f\left((s_A)_{A\in\V}\right) &= \lim_{n\to\infty} \E\left[\exp\left(\sum_{A\in\V}s_A \Delta_A(B_{n})\right)\right].
\end{align}
Then,
\begin{align} \label{eq:MGF-critical-to-MGF}
F\left((s_A)_{A\in\V'}\right) =\frac{\mu}{\mu - \resizebox{0.03\textwidth}{!}{$\displaystyle\sum_{A\in\V}$}~s_A}\cdot f\left((s_A)_{A\in\V}\right)\,.
\end{align}
\end{Lemma}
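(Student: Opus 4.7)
The plan is to evaluate $F$ via the ergodic-average representation \eqref{eq:MGF-erg}, decomposing the integral along the jump times $(B_n)_{n\ge 0}$ of the background Poisson process. The key feature that makes this work is that between consecutive jumps every $\Delta_A(t)$ with $A\in\V$ is an affine function of $t$ with unit slope, so the integrand over $[B_{n-1},B_n)$ is a pure exponential in $t$ and can be integrated in closed form.

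Before doing this, I would dispose of the ``extra'' indices in $\V'\setminus\V$: any such $A$ contains $\src'$, so $\Delta_A(t)=\min_{v\in A}\Delta_v(t)=0$ for every $t\ge 0$ (because $\Delta_{\src'}\equiv 0$ and the other ages are nonnegative). Hence the integrand in \eqref{eq:MGF-erg} only depends on $(s_A)_{A\in\V}$, which is why the RHS of \eqref{eq:MGF-critical-to-MGF} does not involve $s_A$ for $A\ni\src'$. Setting $S:=\sum_{A\in\V}s_A$, the relation $\Delta_A(t)=\Delta_A(B_{n-1})+(t-B_{n-1})$ for $B_{n-1}\le t<B_n$ gives
$$\int_{B_{n-1}}^{B_n}\exp\!\Big(\sum_{A\in\V}s_A\Delta_A(t)\Big)\,dt=\exp\!\Big(\sum_{A\in\V}s_A\Delta_A(B_{n-1})\Big)\cdot\frac{e^{ST_n}-1}{S}.$$

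Taking $\tau=B_N$ in \eqref{eq:MGF-erg} and letting $N\to\infty$, I would then rewrite $F$ as
$$F=\lim_{N\to\infty}\frac{N}{B_N}\cdot\frac{1}{N}\sum_{n=1}^N Y_n,\qquad Y_n:=\exp\!\Big(\sum_{A\in\V}s_A\Delta_A(B_{n-1})\Big)\cdot\frac{e^{ST_n}-1}{S}.$$
The strong law of large numbers applied to the i.i.d.\ $T_n\sim\mathrm{Exp}(\mu)$ gives $N/B_N\to\mu$. For the Ces\`aro average, I would exploit the crucial independence of $T_n$ from the past---in particular from $(\Delta_A(B_{n-1}))_{A\in\V}$, which is a function of $T_1,\dots,T_{n-1}$ and $E_1,\dots,E_{n-1}$---together with the exponential MGF identity $\E[(e^{ST_n}-1)/S]=1/(\mu-S)$ valid for $\Re(S)<\mu$. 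Combined with ergodicity of the embedded chain $(\Delta_A(B_n))_{A\in\V}$, inherited from the ergodicity of the full age process, this yields $\frac{1}{N}\sum_n Y_n\to f\cdot 1/(\mu-S)$, and multiplying by the $\mu$ factor produces \eqref{eq:MGF-critical-to-MGF}.

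The step I expect to require the most care is this last one: passing from a time average of products to the product of $f$ and the deterministic MGF of $T_n$ really invokes an ergodic theorem for the embedded chain $(\Delta_A(B_n))_A$. The conceptual picture is clean because $T_n$ is independent of the $\sigma$-algebra generated by the chain up to time $n-1$, but writing it out carefully requires knowing that the chain admits a stationary distribution and that the relevant exponential functional is integrable under it, which holds in the region $\Re(S)<\mu$ where the MGFs $F$ and $f$ are well-defined in the first place.
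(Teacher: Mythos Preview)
Your proposal is correct and follows essentially the same route as the paper's proof: drop the $A\ni\src'$ terms, decompose the ergodic time average along the jump times $B_n$, integrate the pure exponential over each interval $[B_{n-1},B_n)$ using the unit-slope linearity of the ages, then split off the factor $N/B_N\to\mu$ and use independence of $T_n$ from $(\Delta_A(B_{n-1}))_{A\in\V}$ to obtain $f\cdot 1/(\mu-S)$. You even flag more explicitly than the paper does the ergodicity needed to pass from the Ces\`aro average of the $Y_n$ to the product of limits; the paper simply attributes this step to the strong law of large numbers.
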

\begin{proof}
See Appendix \ref{app:proof-lem-MGF-Relations}.
\end{proof}

The update equation in~\eqref{eq:updateGivenEnEqualeA} will allow us to derive a recursive relations for $f$, and hence for $F$ as well. ``Marginalizing'' $F$ to compute $\E[\exp (s_A\Delta_A )]$ leads to analogous recursive relations:

\begin{Lemma}
\label{lem:MGF-Relations-recur}
Defining
\begin{align} \label{eq:lem-MGF-marg}
  F_{A'}\left(s \right) &:= F\left((s)_{A=A'}~,~(0)_{\substack{A\in\V':\\A\neq A'}}\right) =\E \left[ \exp\left(s \Delta_{A'}\right) \right]\,,\quad\text{for every }A' \in \V'\,,
\end{align}
we get following:
\begin{itemize}
\item If $A'\notin \V$, i.e., if $\src'\in A'$, then
\begin{align} \label{eq:lem-MGF-src-prime}
  F_{A'}\left(s\right) &= 1\,.
\end{align}
\item If $\src\in A'\in \V$, then
\begin{align} \label{eq:lem-MGF-src}
  F_{A'}\left(s\right) &= \frac{\mu_{\src'\src}}{\mu_{\src'\src}-s}=\frac{\lambda}{\lambda-s}\,.
\end{align}
\item If $\src\notin A'\in \V$, then
\begin{align} \label{eq:lem-MGF-marg-recur}
  F_{A'}\left(s\right) &= \frac{1}{\resizebox{0.1\textwidth}{!}{$\displaystyle\sum_{\substack{(u,v)\in E':\\v\in A'\text{ and }u\notin A'}}$} \mu_{uv} - s} \cdot \sum_{\substack{(u,v)\in E':\\v\in A'\text{ and }u\notin A'}} \mu_{uv}\cdot F_{A'\cup\{u\}}\left(s\right)\,.
\end{align}
\end{itemize}
\end{Lemma}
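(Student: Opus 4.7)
My plan is to handle the three cases of the lemma separately, with the third being the substantive one.

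\emph{Case 1 ($\src'\in A'$).} This is immediate: since $\Delta_{\src'}(t)=0$ while $\Delta_v(t)\ge 0$ for every $v$, the minimum $\Delta_{A'}(t)=\min_{v\in A'}\Delta_v(t)$ vanishes identically, so $F_{A'}(s)=\E[e^{0}]=1$.

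\emph{Case 2 ($\src\in A'\in\V$).} I would establish this by a direct sample-path argument. The key observation is that $\Delta_{\src}(t)\le \Delta_v(t)$ for every $v\in V$ and every $t\ge 0$: the freshest packet held at $v$ was originally generated at the source at some time $g_v(t)$, and by then $\src$ holds either that same packet or a strictly newer one (thanks to preemption in its own buffer), so $g_{\src}(t)\ge g_v(t)$ and hence $\Delta_{\src}(t)\le \Delta_v(t)$. Consequently $\Delta_{A'}(t)=\Delta_{\src}(t)$ whenever $\src\in A'$. It then suffices to identify the stationary law of $\Delta_{\src}$: in the augmented graph $G'$, $\src$ is reset to $0$ exactly at the rate-$\lambda$ Poisson arrivals on the virtual edge $(\src',\src)$ (since $\Delta_{\src'}\equiv 0$) and grows linearly in between, so standard memorylessness yields $\Delta_{\src}\sim\mathrm{Exp}(\lambda)$, whence $F_{A'}(s)=\lambda/(\lambda-s)$.

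\emph{Case 3 ($\src\notin A'\in\V$).} This is the substantive part. I would work with $f_{A'}(s)$ from \eqref{eq:MGF-critical} and derive its stationary fixed-point equation via \eqref{eq:updateGivenEnEqualeA}. Because $\src\notin A'$ rules out the reset branch, conditioning on $E_n=(u,w)$ (independent of $T_n$ and of the history) and using $\E[e^{sT_n}]=\mu/(\mu-s)$ together with the stationarity identity $\lim_n\E[e^{s\Delta_{A'}(B_n)}]=\lim_n\E[e^{s\Delta_{A'}(B_{n-1})}]=f_{A'}(s)$ yields
\begin{equation*}
f_{A'}(s)=\frac{1}{\mu-s}\Bigl[\sum_{\substack{(u,w)\in E'\\ w\notin A'}}\mu_{uw}\,f_{A'}(s)+\sum_{\substack{(u,w)\in E'\\ w\in A'}}\mu_{uw}\,f_{A'\cup\{u\}}(s)\Bigr].
\end{equation*}
Splitting the second sum by whether $u\in A'$ (in which case $A'\cup\{u\}=A'$ and the term is self-referential), gathering all $f_{A'}(s)$ contributions on the left, and using $\mu=\sum_{e\in E'}\mu_e$ simplifies the coefficient of $f_{A'}(s)$ to $\sum_{(u,w):\,w\in A',\,u\notin A'}\mu_{uw}-s$. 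Since $\src\notin A'$ forces $u\neq\src'$ in the surviving sum (the only edge leaving $\src'$ has $w=\src$), every $A'\cup\{u\}$ appearing lies in $\V$, so Lemma~\ref{lem:MGF-Relations} applies uniformly via $F_X(s)=\tfrac{\mu}{\mu-s}\,f_X(s)$; multiplying through by $\mu/(\mu-s)$ converts the equation into exactly \eqref{eq:lem-MGF-marg-recur}.

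The main obstacle I anticipate is the algebra of Case~3: carefully identifying the telescoping among the $w\notin A'$ and $w\in A',\,u\in A'$ contributions into the coefficient $\sum_{(u,w):\,w\in A',\,u\notin A'}\mu_{uw}-s$, and verifying that the special edge $(\src',\src)$ never slips into the final recursion whenever $\src\notin A'$.
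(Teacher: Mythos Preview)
Your proposal is correct. Cases~1 and~3 are essentially what the paper does: the paper first derives a full multivariable recursion for $f$ (its equation~\eqref{eq:lem-MGF-critical-recur}) and then specializes to $s_{A'}=s$, $s_A=0$ otherwise, while you work at the single-variable level from the start; the resulting fixed-point equation for $f_{A'}$ and the passage to $F_{A'}$ via Lemma~\ref{lem:MGF-Relations} are identical, and your observation that $\src\notin A'$ keeps the edge $(\src',\src)$ out of the surviving sum (so every $A'\cup\{u\}$ stays in $\V$ and Lemma~\ref{lem:MGF-Relations} applies term by term) is exactly the point that makes the conversion legitimate.

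Case~2 is where you genuinely diverge. The paper proves $F_{A'}(s)=\lambda/(\lambda-s)$ for $\src\in A'\in\V$ by reverse induction on $|A'|$, plugging into the same recursion~\eqref{eq:lem-MGF-critical-recur}: starting from $A'=V$ it solves for $f_V$, then feeds the induction hypothesis into the recursion for smaller $A'$. Your route is more elementary and structural: you use the sample-path inequality $\Delta_\src(t)\le\Delta_v(t)$ (any packet at $v$ was generated at $\src$ no later than the freshest packet currently at $\src$) to collapse $\Delta_{A'}$ to $\Delta_\src$, and then identify the stationary law of $\Delta_\src$ directly as the backward recurrence time of the rate-$\lambda$ Poisson process (here you implicitly use that $\src$ has in-degree zero in $G$, so its age is reset only on $(\src',\src)$ events). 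This buys a shorter, intuition-driven proof that avoids any induction; the paper's argument, by contrast, is purely mechanical in the MGF recursion and never appeals to the monotonicity $\Delta_\src\le\Delta_v$. One small caveat: your inequality $\Delta_\src(t)\le\Delta_v(t)$ need not hold for \emph{all} $t\ge0$ under arbitrary initial ages, only once the first source packet has arrived---but this is harmless for the stationary statement you need.
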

\begin{proof}
See Appendix \ref{app:proof-lem-MGF-Relations-recur}.
\end{proof}

\subsection{Proof of Theorem~\ref{thm:main}}

For every $A \in \V'$, let
\begin{align}
\tilde{F}_{A} (s) = \E \left[ \exp \left( s \tDelta_{A} \right) \right]
\end{align}
be the right-hand side of~\eqref{eq:main-eq-MGF}. We need to show that
\begin{align} \label{eq:EquationMain}
\tilde{F}_{A} (s) = F_{A} (s)
\end{align}
for all $A \in \V'$ and all $s \in \mathbb{C}$ for which $F_A(s)$ exists.

Note that for every $A$ satisfying $\src' \in A$, we have $\Delta_A = \tDelta_A = 0$ by definition, hence 
\begin{equation}
F_A(s) = \tilde{F}_A(s) = 1\,.
\label{eq:lem-F-tF-src-prime}
\end{equation}
It remains to show the equality for $A \in \V$. Note that if $\src\in A\in \V$, then \eqref{eq:lem-MGF-src} implies that
\begin{equation}
F_A(s)=\frac{\mu_{\src'\src}}{\mu_{\src'\src}-s}= \E\left[sS_{\src'\src}\right]= \E\left[s\tDelta_{\src}\right]\stackrel{\text{(a)}}= \E\left[s\tDelta_{A}\right]=\tilde{F}_A(s)\,,
\label{eq:lem-F-tF-src}
\end{equation}
where (a) follows from the fact that for all $v\in V$, we have $\tDelta_v\geq\tDelta_\src$, and hence $\tDelta_{A}=\tDelta_{\src}$ for $\src\in A\in \V$.

We will now proceed by reverse induction on the size of $A$ in order to show that $F_A(s)=\tilde{F}_A(s)$ for all $A\in \V'$.  Consider $|A|=|V'|$, i.e., $A = V'$, then from \eqref{eq:lem-F-tF-src-prime} we have
$F_{V'}(s) = \tilde{F}_{V'}(s)=1$.

Now let $V'\neq A \in\V'$ and assume that \eqref{eq:EquationMain} holds for every $A'$ satisfying $|A'|>|A|$. If $\src' \in A$ or $\src\in A$, then the equality follows from Equations \eqref{eq:lem-F-tF-src-prime} and \eqref{eq:lem-F-tF-src}. If $\src\notin A\in \V$, then using Equation~\eqref{eq:lem-MGF-marg-recur} of Lemma~\ref{lem:MGF-Relations-recur} and applying the induction hypothesis on $A\cup\{u\}$ for $u\notin A$, we get
\begin{align}
F_{A}\left(s \right) 
& = \frac{1}{\resizebox{0.1\textwidth}{!}{$\displaystyle\sum_{\substack{(u,v)\in E':\\v\in A\text{ and }u\notin A}}$} \mu_{uv} - s} \cdot \sum_{\substack{(u,v)\in E':\\v\in A\text{ and }u\notin A}} \mu_{uv}\cdot \tilde{F}_{A\cup\{u\}}\left(s\right)\,.
\end{align}
It remains to show that
\begin{align} \label{eq:EquationToShow}
\tilde{F}_{A}\left(s \right)  = \frac{1}{\resizebox{0.1\textwidth}{!}{$\displaystyle\sum_{\substack{(u,v)\in E':\\v\in A\text{ and }u\notin A}}$} \mu_{uv} - s} \cdot \sum_{\substack{(u,v)\in E':\\v\in A\text{ and }u\notin A}} \mu_{uv}\cdot \tilde{F}_{A\cup\{u\}}\left(s\right)\,.
\end{align}
In order to simplify the above equation, let us introduce
\begin{equation}
    E_A := \{ (u,v) \in E':  u \notin A,~ v \in A\}\,,\label{eq:def-E_A}
\end{equation}
and
\begin{equation}
    \mu_A := \sum_{(u,v) \in E_A} \mu_{uv}\,.\label{eq:def-mu_A}
\end{equation}
Rewriting~\eqref{eq:EquationToShow}, we need to show that
\begin{align}
\label{eq:EquationToShow-2}
\tilde{F}_{A}\left(s \right)  = \frac{1}{\mu_A - s} \cdot \sum_{(u,v) \in E_A } \mu_{uv}\cdot \tilde{F}_{A\cup\{u\}}\left(s\right).
\end{align}
Let 
\begin{align}
S_A = \min_{(u,v) \in E_A} S_{uv}.
\end{align} Then $S_A$ is exponentially distributed with rate $\mu_A$. Now note that
\begin{align}
\tilde{F}_{A}\left(s \right)  & = \E \left[ \exp \left\lbrace s \tDelta_A \right\rbrace \right] \\
& =  \sum_{(u,v) \in E_A} \Pr \left[S_A = S_{uv} \right]\cdot  \E \left[ \exp \left\lbrace s \tDelta_A \right\rbrace  \middle| S_A =S_{uv} \right] \\
& = \sum_{(u,v) \in E_A} \frac{\mu_{uv}}{\mu_A} \cdot \E \left[ \exp \left\lbrace s \tDelta_A \right\rbrace  \middle| S_A =S_{uv} \right]. \label{eq:conditioning}
\end{align}
To get~\eqref{eq:EquationToShow-2}, it is sufficient to show that for every $(u^\star,v^\star) \in E_A$, we have
\begin{align}
\label{eq:EquationToShow-3}
\E \left[ \exp \left\lbrace s \tDelta_A \right\rbrace  \middle| S_A =S_{u^\star v^\star} \right] =\frac{\mu_A}{\mu_A-s} \cdot\tilde{F}_{A \cup \{u^\star \}}(s).
\end{align}

To that end, the following lemma will be useful.
\begin{Lemma} \label{lem:age-by-path}
Given $A \in \V$, define
\begin{align}
V'_A & := \{ u \in V': \exists v \in A \text{ such that } (u,v) \in E_A \}, \\
\text{ and } 
V^{'\star}_A & := \{u \in V'_A: \exists P \in \mathcal{P}(\src' \to u) \text{ such that all nodes } w \text{ in } P \text{ satisfy } w \notin A \}.
\end{align}
Moreover, given $u , v \in V'\setminus  A$, define
\begin{align}
\mathcal{P}_{A^c} (u \to v) = \{P \in \mathcal{P}(u \to v): \text{all nodes } w \text{ in } P \text{ satisfy } w \notin A \}.
\end{align}
Then,
\begin{align}
\tDelta_A & = \min_{u \in V^{'}_A } \left\lbrace \min_{P \in \mathcal{P} (\src' \to u)} \sum_{e \in P} S_e + \min_{v:~(u,v) \in E_A} S_{uv} \right\rbrace. \label{eq:lem-age-by-path-0}\\
& = \min_{u \in V^{'\star}_A } \left\lbrace \min_{P \in \mathcal{P}_{A^c} (\src' \to u)} \sum_{e \in P} S_e + \min_{v:~(u,v) \in E_A} S_{uv} \right\rbrace  \label{eq:lem-age-by-path}\\
& = \min_{u \in V^{'\star}_A } \left\lbrace \tDelta_{u,A^c}+ S_{u,A} \right\rbrace\,,\label{eq:lem-age-by-path-simple}
\end{align}
where
\begin{equation}
    \tDelta_{u,A^c} := \min_{P \in \mathcal{P}_{A^c} (\src' \to u)} \sum_{e \in P} S_e \quad\text{for every }u\in  V^{'\star}_A\,,
    \label{eq:tdelta-Ac}
\end{equation}
and
\begin{equation}
    S_{u,A}:=\min_{v:\; (u,v) \in E_A} S_{uv}\quad\text{for every }u\in  V^{'\star}_A\,.
\end{equation}
\end{Lemma}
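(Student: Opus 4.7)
The plan is to reduce the lemma to a shortest-path truncation observation: every path from $\src'$ to a node in $A$ can be cut at the first vertex along it that lies in $A$, yielding a prefix whose only contact with $A$ is at its terminal vertex. I will first establish~\eqref{eq:lem-age-by-path} (from which~\eqref{eq:lem-age-by-path-simple} is immediate by the definitions of $\tDelta_{u,A^c}$ and $S_{u,A}$), and then deduce~\eqref{eq:lem-age-by-path-0} by sandwiching it between the two other expressions.

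For the ``$\tDelta_A\le$ RHS'' direction of~\eqref{eq:lem-age-by-path}, I would fix $u\in V^{'\star}_A$, $P\in\mathcal{P}_{A^c}(\src'\to u)$ and $v$ with $(u,v)\in E_A$: the concatenation $P\cup\{(u,v)\}$ is a valid path in $\mathcal{P}(\src'\to v)$ with $v\in A$, so $\sum_{e\in P}S_e+S_{uv}\ge\tDelta_v\ge\tDelta_A$, and taking inner minima over $P$ and $v$ then the outer minimum over $u$ yields the bound. For the matching lower bound, I would fix a path $P^\star$ from $\src'$ to some $v^\star\in A$ realizing $\sum_{e\in P^\star}S_e=\tDelta_A$, let $w^{\star\star}$ be the first vertex of $P^\star$ belonging to $A$ (it exists since $v^\star\in A$), and let $u^{\star\star}$ be its predecessor on $P^\star$. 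By construction $u^{\star\star}\notin A$, the sub-path $P^{\star\star}$ from $\src'$ to $u^{\star\star}$ lies in $\mathcal{P}_{A^c}(\src'\to u^{\star\star})$, and $(u^{\star\star},w^{\star\star})\in E_A$; together these force $u^{\star\star}\in V^{'\star}_A$. Since $P^{\star\star}\cup\{(u^{\star\star},w^{\star\star})\}$ is a prefix of $P^\star$ and the remaining edge weights are nonnegative, $\tDelta_{u^{\star\star},A^c}+S_{u^{\star\star},A}\le\sum_{e\in P^{\star\star}}S_e+S_{u^{\star\star}w^{\star\star}}\le\tDelta_A$, which completes~\eqref{eq:lem-age-by-path}.

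Finally, to obtain~\eqref{eq:lem-age-by-path-0}, I would sandwich: on one hand $V^{'\star}_A\subseteq V'_A$ and $\tDelta_{u,A^c}\ge\tDelta_u$, so the RHS of~\eqref{eq:lem-age-by-path} dominates the RHS of~\eqref{eq:lem-age-by-path-0}; on the other hand, for every $u\in V'_A$ and every $v$ with $(u,v)\in E_A$ we have $\tDelta_u+S_{uv}\ge\tDelta_v\ge\tDelta_A$, whence $\min_{u\in V'_A}\{\tDelta_u+S_{u,A}\}\ge\tDelta_A=$ RHS of~\eqref{eq:lem-age-by-path}. I do not anticipate a real obstacle, since everything is a straightforward shortest-path argument on a DAG-like graph; the one point requiring care is ensuring that the predecessor $u^{\star\star}$ obtained from the truncation lies in $V^{'\star}_A$ rather than merely in $V'_A$, and this is guaranteed precisely because we cut $P^\star$ at its \emph{first} entry into $A$.
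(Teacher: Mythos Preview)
Your proposal is correct and follows essentially the same approach as the paper: the paper's proof is a two-sentence sketch stating that the shortest path from $\src'$ to $A$ must have its last edge in $E_A$ and can be taken to meet $A$ only at its terminal vertex, and your argument is precisely the detailed write-up of this truncation idea (cut the optimal path at its first entry into $A$). One minor remark: the graph is not necessarily a DAG, but your argument never uses acyclicity---it only needs nonnegative edge weights, which holds since $S_e\ge 0$.
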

The proof of the lemma is straightforward: The paths described in~\eqref{eq:lem-age-by-path-0} cover all possible paths from $\src'$ to $A$ that end with an edge in $E_A$. Any other path would have a larger weight. The paths in~\eqref{eq:lem-age-by-path} are exactly the paths from $\src'$ to $A$ that have only one node in $A$, namely the final node. Any other path would have a larger weight.

The lemma is stating the following intuitive fact: The shortest path from $\src'$ to $A$ has only one node in $A$, namely the final node. Note that $E_A$ is non-empty because $\src' \notin A$.

From Lemma~\ref{lem:age-by-path}, we get
\begin{align}
\tDelta_A & = \min \left\lbrace \tDelta_{u^\star,A^c} + S_{u^\ast,A}~,~ \min_{ \substack{u \in V^{'\star}_A: \\ u \neq u^\star} } \left\lbrace \tDelta_{u,A^c} + S_{u,A} \right\rbrace \right\rbrace\,.
\end{align}

Now conditioned on $S_{u^\star v^\star} = S_A$, we have $S_{u^\star,A}=S_{u^\star v^\star}$, and we can rewrite
\begin{align} \label{eq:tdelta-rewritten}
\tDelta_A & = S_{u^\star v^\star} + \min \left\lbrace \tDelta_{u^\star,A^c}~,~ \min_{ \substack{u \in V^{'\star}_A: \\ u \neq u^\star} } \left\lbrace \tDelta_{u,A^c} + S_{u,A} -  S_{u^\star v^\star}\right\rbrace \right\rbrace.
\end{align}
Note that the paths that appear in~\eqref{eq:tdelta-Ac} do not contain any edge in $E_A$, since by definition the paths are in $A^c$ and every edge in $E_A$ contains one element in $A$. Hence, these terms are independent from $S_{u^\star v^\star}$. Therefore, $(\tDelta_{u,A^c})_{u\in V^{'\star}_A}$ is independent from $S_{u^\star v^\star}$. Finally, note that for $u\in V^{'\star}_A\setminus\{u^\star\}$, $S_{u,A}$ is independent from $S_{u^\star v^\star}$, and hence by the memorylessness property of exponential random variables\footnote{Note that $S_{u,A}$ is an exponential random variable with rate $\displaystyle\sum_{v: (u,v) \in E_A} \mu_{uv}$.}, given $S_{u^\star v^\star}< S_{u,A}$, the random variable $S_{u,A}-S_{u^\star v^\star}$ is conditionally independent from $S_{u^\star v^\star}$, and its conditional distribution is the same as the unconditional distribution of $S_{u,A}$. Hence, given $S_{u^\star v^\star} = S_{u^\star,A}= S_A$, the second term of the right-hand side of~\eqref{eq:tdelta-rewritten} is conditionally independent from $S_{u^\star v^\star}$ and its conditional distribution is the same as the unconditional distribution of
\begin{align}
\min \left\lbrace \tDelta_{u^\star,A^c}~,~ \min_{ \substack{u \in V^{'\star}_A: \\ u \neq u^\star} } \left\lbrace \tDelta_{u,A^c} + S_{u,A}\right\rbrace \right\rbrace\,.
\end{align}
Therefore,
\begin{align}
& \E \left[ \exp \left\lbrace s \tDelta_A \right\rbrace  \middle| S_A =S_{u^\star v^\star} \right] \\
& = \E \left[ \exp \left\lbrace  s\cdot S_{u^\star v^\star}\right\rbrace \middle| S_A =S_{u^\star v^\star} \right] \cdot 
\E \left[ \exp  \left( s \cdot 
\min \left\lbrace \tDelta_{u^\star,A^c}~,~ \min_{ \substack{u \in V^{'\star}_A: \\ u \neq u^\star} } \left\lbrace \tDelta_{u,A^c} + S_{u,A}\right\rbrace \right\rbrace \right ) \right] \\
& = \frac{\mu_A}{\mu_A -s }\cdot \E \left[ \exp  \left( s \cdot 
\min \left\lbrace \tDelta_{u^\star,A^c}~,~ \min_{ \substack{u \in V^{'\star}_A: \\ u \neq u^\star} } \left\lbrace \tDelta_{u,A^c} + S_{u,A}\right\rbrace \right\rbrace \right ) \right], \label{eq:almost-there}
\end{align}
where the first equality follows from the argument above. Finally, note that
\begin{align}
\min \left\lbrace \tDelta_{u^\star,A^c}~,~ \min_{ \substack{u \in V^{'\star}_A: \\ u \neq u^\star} } \left\lbrace \tDelta_{u,A^c} + S_{u,A}\right\rbrace \right\rbrace
&\stackrel{\text{(a)}}=\min \left\lbrace \tDelta_{u^\star,A^c}~,~ \min_{u \in V^{'\star}_A } \left\lbrace \tDelta_{u,A^c} + S_{u,A}\right\rbrace \right\rbrace\\
&\stackrel{\text{(b)}}=\min \left\lbrace \tDelta_{u^\star,A^c}~,~ \tDelta_A \right\rbrace\label{eq:deltaAtoAu}\,,
\end{align}
where (a) follows from the fact that $\tDelta_{u^\star,A^c}\leq \tDelta_{u^\star,A^c}+S_{u^{\ast},A}$, and (b) follows from Equation~\eqref{eq:lem-age-by-path-simple}. Now we have two possibilities:
\begin{itemize}
    \item The optimal path in \eqref{eq:thm-main-delta-v-2} for $\tDelta_{u^\star}$ does not pass through $A$. In this case, we have $\tDelta_{u^\star,A^c}=\tDelta_{u^\star}$ and hence
    \begin{equation}
        \min \left\lbrace \tDelta_{u^\star,A^c}~,~ \tDelta_A \right\rbrace=\min \left\lbrace \tDelta_{u^\star}~,~ \tDelta_A \right\rbrace=\tDelta_{A\cup \{u^\star\}}\,.
    \end{equation}
    \item The optimal path in \eqref{eq:thm-main-delta-v-2} for $\tDelta_{u^\star}$ passes through $A$. In this case, we have $\tDelta_{u^\star,A^c}\geq \tDelta_{u^\star}\geq \tDelta_A$ and hence
    \begin{equation}
        \min \left\lbrace \tDelta_{u^\star,A^c}~,~ \tDelta_A \right\rbrace=\tDelta_A=\min \left\lbrace \tDelta_{u^\star}~,~ \tDelta_A \right\rbrace=\tDelta_{A\cup \{u^\star\}}\,.
    \end{equation}
\end{itemize}
We conclude that in all cases, we have
\begin{align}
\min \left\lbrace \tDelta_{u^\star,A^c}~,~ \min_{ \substack{u \in V^{'\star}_A: \\ u \neq u^\star} } \left\lbrace \tDelta_{u,A^c} + S_{u,A}\right\rbrace \right\rbrace
=\tDelta_{A\cup \{u^\star\}}\,.
\end{align}

By combining this with Equations~\eqref{eq:conditioning},~\eqref{eq:almost-there}, and~\eqref{eq:deltaAtoAu} yields our desired result in~\eqref{eq:EquationToShow-2}:
\begin{align}
\tilde{F}_A(s) & = \sum_{(u,v) \in E_A} \frac{\mu_{uv}}{\mu_A} \cdot \E \left[ \exp \left\lbrace s \tDelta_A \right\rbrace  \middle| S_A =S_{uv} \right] \\
& = \sum_{(u,v) \in E_A}  \frac{\mu_{uv}}{\mu_A} \cdot \frac{\mu_A}{\mu_A -s} \cdot\E \left[ \exp \left\lbrace s \tDelta_{A \cup \{u\}} \right\rbrace \right] \\
& = \sum_{(u,v) \in E_A}  \frac{\mu_{uv}}{\mu_A -s}\cdot \tilde{F}_{A \cup \{u\}}(s).
\end{align}

\section{Examples}~\label{sec:examples}

In this section, we apply our results in two basic, yet fundamental
layout examples. The first is the (well-studied and well-known) serial
cascade of servers and the second is the simple
``triangle'' layout of servers.

\subsection{Serial Cascade}

Consider a notwork consisting of a source $\src=v_0$, a destination $d=v_{n+1}$, and a serial cascade of $n$ relay servers $v_1,\ldots,v_n$. The cascade is shown in Figure~\ref{fig:Serial} below.

\begin{figure}[H]
  \begin{center}
    \setlength{\unitlength}{1.7cm}
    \begin{picture}(9.5,1)(0,-0.5)
      \put(0,0){\vtx{$\lambda$}{$v_0$}{$\mu_{v_0  v_1}$}}
      \put(2,0){\vtx{}{$v_1$}{$\mu_{v_1  v_2}$}}
      \dashline{0.07}(4.5,0)(6,0)
      \put(6,0){\vtx{$\mu_{v_{n-1}  v_n}$}{$v_n$}{}}
      \put(8,0){\vtxe{$\mu_{v_n  v_{n+1}}$}{$v_{n+1}$}}
    \end{picture}
    \caption{Serial cascade of servers.
      \label{fig:Serial} }
  \end{center}
\end{figure}
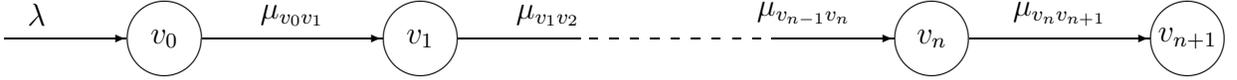

Yates used stochastic hybrid systems and showed
in~\cite{YatesPreempServs2018} that the average age at the destination is
\begin{equation}
  \frac{1}{\lambda} + \sum_{j = 0}^{n} \frac{1}{\mu_{v_{j}  v_{j+1}}} \,.
\end{equation}

Applying Theorem~\ref{thm:main} iteratively on the vertices $ v_1,
\cdots, v_n$ and $d=v_{n+1}$ yields
\begin{eqnarray}
  \Delta_{\src}& = &\Delta_{v_0}\sim \tDelta_{v_0} \sim  \mathrm{Exp}(\lambda) \\
  \Delta_{v_1} & \sim & \tDelta_{v_1}= \tDelta_{v_0} + S_{v_0 v_1} \sim \mathrm{Exp}(\lambda) + \mathrm{Exp}(\mu_{v_0  v_1}) \\
  \Delta_{v_j} & \sim & \tDelta_{v_j} = \tDelta_{v_{j-1}} + S_{v_{j-1}  v_{j}} \sim \mathrm{Exp}(\lambda) 
  + \sum_{l = 0}^{j-1} \mathrm{Exp}(\mu_{v_{l}  v_{l+1}}), \quad j = 2, \cdots, n+1 \\
  \Delta_{d} & = & \Delta_{v_{n+1}}\sim \mathrm{Exp}(\lambda) 
  + \sum_{l = 0}^{n} \mathrm{Exp}(\mu_{v_{l}  v_{l+1}})\\
  \E[\Delta_{d}] & = & \frac{1}{\lambda} + \sum_{l = 0}^{n} \frac{1}{\mu_{v_{l}  v_{l+1}}} \,,
\end{eqnarray}
which are the results derived in~\cite{YatesPreempServs2018} and \cite{YatesAoIMomTrans2018}.

\subsection{Triangle}

\label{sec:Triangle}

Next, we consider the special case where the source and destination
form with an additional server a ``triangular'' layout as shown in
Figure~\ref{fig:Triangle} below.

\begin{figure}[H]
  \begin{center}
    \setlength{\unitlength}{1.5cm}
    \begin{picture}(4.25,2)(0,-0.4)
      \put(0,0){\vtx{$\lambda$}{$\src$}{$\mu_{\src  d}$}}
      \put(1.25,0.3){\vtxud{$\mu_{\src  v}$}{$v$}{$\mu_{v  d}$}}
      \put(2.5,0){\vtxe{}{$d$}}
    \end{picture}
    \caption{Triangle layout.
      \label{fig:Triangle} }
  \end{center}
\end{figure}
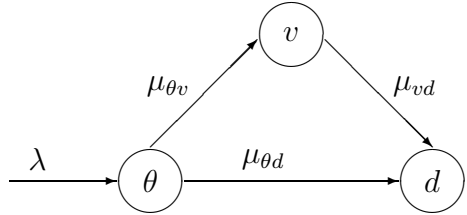

We apply Theorem~\ref{thm:main} on the vertices $\{ v, d \}$ to deduce
the distribution of the age at the destination:
\begin{eqnarray}
  \Delta_{\src} & \sim & \tDelta_{\src}\sim \mathrm{Exp}(\lambda) \\
  \Delta_{v} & \sim & \tDelta_{v} =  \tDelta_{\src} + S_{\src  v} \sim \mathrm{Exp}(\lambda) + \mathrm{Exp}(\mu_{\src  v}) \\
  \Delta_{d} & \sim & \tDelta_{d} =  \min \left\{ \tDelta_{\src} + S_{\src  d}, \tDelta_{v} + \
  S_{v  d} \right\} \\
  & \sim & \mathrm{Exp}(\lambda) + \min \left\{ \mathrm{Exp}(\mu_{\src  d})~,~ \mathrm{Exp}(\mu_{\src  v}) + \mathrm{Exp}(\mu_{v  d}) \right\}.
\end{eqnarray}

Next we determine the distribution of the minimum. Let $X$ denote the
random variable that is the minimum of $S_{\src  d}\sim\mathrm{Exp}(\mu_{\src  d})$
and $S \sim \mathrm{Exp}(\mu_{\src  v}) + \mathrm{Exp}(\mu_{v 
  d})$.

Assume first that $\mu_{\src  v}$ and $\mu_{v  d}$ are
different. Noting that the random variable $S$ is hypoexponentially\footnote{The cdf of a hypoexponential random variable with (different) parameters $\mu_1$ and $\mu_2$ is given by $F(t) =1- \frac{\mu_2 }{\mu_2 - \mu_1} \exp \{- \mu_1 t\}+ \frac{\mu_1 }{\mu_2 - \mu_1} \exp \{- \mu_2 t\}$, $t \geq 0$.}
distributed with parameters $\{ \mu_{\src  v}, \mu_{v  d}\}$ and
using independence, we get
\begin{align}
  \Pr(X > x) & = \Pr [S_{\src  d} > x]  \cdot \Pr [S > x] \\
  & = e^{- \mu_{\src  d} \, x}\cdot \left( \frac{\mu_{v  d}}{\mu_{v  d} - \mu_{\src  v}} e^{-\mu_{\src  v} \, x}
    - \frac{\mu_{\src  v}}{\mu_{v  d} - \mu_{\src  v}} e^{- \mu_{v  d} \, x} \right) \\
  & =  \frac{\mu_{v  d}}{\mu_{v  d} - \mu_{\src  v}} e^{- (\mu_{\src  v} + \mu_{\src  d} ) \, x}
  - \frac{\mu_{\src  v}}{\mu_{v  d} - \mu_{\src  v}} e^{- (\mu_{v  d} + \mu_{\src  d} ) \, x}\,, 
 \end{align}
 hence, the pdf of $X$ satisfies
 \begin{align}
  p_X(x) & = \frac{\mu_{v  d}}{\mu_{v  d} - \mu_{\src  v}} p_{Y}(x) 
  - \frac{\mu_{\src  v}}{\mu_{v  d} - \mu_{\src  v}} p_{Z}(x)\,,
\end{align}
where $Y\sim\mathrm{Exp}(\mu_{\src  v} + \mu_{\src  d} )$ and $Z\sim\mathrm{Exp}(\mu_{v  d} + \mu_{\src  d} )$. We can now deduce that the average AoI at destination is
\begin{align}
  \E[\Delta_{d}] & = \E[\tDelta_\src] + \frac{\mu_{v  d}}{\mu_{v  d} - \mu_{\src  v}}\,\E[Y] - \frac{\mu_{\src  v}}{\mu_{v  d} - \mu_{\src  v}}\,\E[Z]\\
  & = \frac{1}{\lambda} + \frac{\mu_{v  d}}{\mu_{v  d} - \mu_{\src  v}} \, 
  \frac{1}{\mu_{\src  v} + \mu_{\src  d}}
  - \frac{\mu_{\src  v}}{\mu_{v  d} - \mu_{\src  v}}  \frac{1}{\mu_{v  d} + \mu_{\src  d}} \\
  & = \frac{1}{\lambda} + \frac{\mu_{v  d} \left( \mu_{v  d} + \mu_{\src  d} \right)
    - \mu_{\src  v} \left( \mu_{\src  v} + \mu_{\src  d} \right)}{\left( \mu_{v  d} - \mu_{\src  v} \right)
    \left( \mu_{\src  v} + \mu_{\src  d} \right) \left( \mu_{v  d} + \mu_{\src  d} \right)} \\
  & = \frac{1}{\lambda} + \frac{\mu_{\src  v} + \mu_{v  d} + \mu_{\src  d} }{
    \left( \mu_{\src  v} + \mu_{\src  d} \right) \left( \mu_{v  d} + \mu_{\src  d} \right)} \\
  & = \frac{1}{\lambda} + \frac{1}{\mu_{\src  v} + \mu_{\src  d}}
  + \frac{1}{\mu_{v  d} + \mu_{\src  d} }
  - \frac{\mu_{\src  d} }{\left( \mu_{\src  v} + \mu_{\src  d} \right) \left( \mu_{v  d} + \mu_{\src  d} \right)}\,.
\end{align}

In the case where $\mu_{\src  v} = \mu_{v  d} = \mu$, the random
variable $S$ is Erlang-2 distributed with (rate) parameter $\mu$. Using
independence,
\begin{align}
  \Pr[X > x] & = \Pr [S_{\src  d} > x]  \cdot \Pr [S > x] \\
  & = e^{- \mu_{\src  d} \, x} \left( 1 + \mu x  \right) e^{- \mu x} \\
  & = \left( 1 + \mu x  \right) e^{- (\mu + \mu_{\src  d}) x}\,,
   \end{align}
 and the pdf of $X$ satisfies
 \begin{align}
  p_X(x) & = \left( \frac{\mu_{\src  d}}{\mu + \mu_{\src  d}} + \mu x \right) p_{Y}(x)\,,
\end{align}
where $Y\sim \mathrm{Exp}(\mu + \mu_{\src  d} )$. We can now get the ``same'' above formula for the average AoI at the destination:
\begin{align}
  \E[\Delta_{d}] 
  & = \E[\tDelta_\src] + \frac{\mu_{\src  d}}{\mu + \mu_{\src  d}}\,\E[Y] + \mu \,\E[Y^2]\\
  & = \frac{1}{\lambda} + \frac{\mu_{\src  d}}{\left( \mu + \mu_{\src  d} \right)
    \left(\mu + \mu_{\src  d} \right) } + \frac{2 \mu}{\left(\mu + \mu_{\src  d} \right)^2 } \\
  & = \frac{1}{\lambda} + \frac{\mu_{\src  d} + 2 \mu}{\left( \mu + \mu_{\src  d} \right)
    \left(\mu + \mu_{\src  d} \right) } \\
  & = \frac{1}{\lambda} + \frac{2}{\left( \mu + \mu_{\src  d} \right) }
  - \frac{\mu_{\src  d}}{\left(\mu + \mu_{\src  d} \right)^2 }\,.
\end{align}

Note that the above formulas for the average AoI can also be obtained using the recursive formula for the average AoI that Yates derived in \cite{YatesGossipIWSPAWC2021}.

\section{Applications and computational aspects}~\label{sec:computation}

In this section we discuss applications and a few computational aspects of our results.

\subsection{MGF of the stationary distribution of AoI}

\label{sec:computingMGF}

By replacing $F_{A'}$ in Equation~\eqref{eq:lem-MGF-marg-recur} with its definition
\begin{equation}
F_{A'}\left(s \right) =\E \left[ \exp\left(s \Delta_{A'}\right) \right]\,,
\end{equation}
we get the following recursive formula for every $\src\notin A\in\V$:
\begin{equation}
\label{eq:MGF-Relations-recur}
\E \left[ \exp\left(s \Delta_{A}\right) \right]  =  \sum_{(u,v) \in E_{A} } \frac{\mu_{uv}}{\mu_{A} - s}\cdot \E \left[ \exp\left(s \Delta_{A\cup\{u\}}\right) \right]\,,
\end{equation}
where $E_{A}$ and $\mu_{A}$ are as in Equation~\eqref{eq:def-E_A} and \eqref{eq:def-mu_A}, respectively.

The recursive formula in~\eqref{eq:MGF-Relations-recur} can be used to compute the MGFs of all AoIs: We can use the fact that Equations \eqref{eq:lem-MGF-src-prime} and \eqref{eq:lem-MGF-src} give us $\E \left[ \exp\left(s \Delta_{A}\right) \right] =F_A(s)$ for $\src\in A$ or $\src\notin A$, and starting from this, we can recursively compute $\E \left[ \exp\left(s \Delta_{A}\right) \right]$ by order of decreasing $|A|$. The computational complexity is exponential in the number of nodes in the graph, which is essentially the same complexity as the method obtained by Yates in \cite{YatesGossipIWSPAWC2021} for computing average AoIs.

\subsection{Exact computation of average AoI}
In this subsection, we re-derive the recursive formula of \cite{YatesGossipIWSPAWC2021} for average AoIs. By taking the derivative with respect to $s$ on both sides of Equation~\eqref{eq:MGF-Relations-recur}, and then evaluating at $s=0$, we get:
\begin{align}
\label{eq:AoI-Relations-recur}
\E \left[ \Delta_{A}\right] 
&=  \sum_{(u,v) \in E_{A} } \left( \frac{\mu_{uv}}{\mu_{A} }\cdot \E \left[ \Delta_{A\cup\{u\}} \right]+  \frac{\mu_{uv}}{\mu_{A}^2 }\right)\\
&=\frac{\mu_{A}}{\mu_{A}^2 }+  \sum_{(u,v) \in E_{A} } \left( \frac{\mu_{uv}}{\mu_{A} }\cdot \E \left[ \Delta_{A\cup\{u\}} \right]\right)\\
&=  \frac{1}{\mu_A}\left( 1+\sum_{(u,v) \in E_{A} } \mu_{uv}\cdot \E \left[ \Delta_{A\cup\{u\}} \right]\right)\,.
\end{align}
By noticing that $\Delta_{A\cup\{\src'\}}=0$, we can see that the above formula is exactly the same as the one obtained by Yates in \cite{YatesGossipIWSPAWC2021}.

\subsection{Exact computation of averages of arbitrary functions of the age}

Once the MGF of the AoI is computed using~\eqref{eq:MGF-Relations-recur}, we can compute the probability distribution by applying inverse Fourier transform. This allows for the computation of the average of an arbitrary function of the age. In particular, we can compute the age-violation probability
\begin{equation}
\Pr[\Delta_{A}\geq d] =  \E \left[ \mathbf{1}_{\{\Delta_{A}\geq d\}}\right] \,.
\end{equation}

If we are only interested in getting upper bounds on the age-violation probability, we can avoid the application of inverse Fourier transform. More precisely, we can directly use the moment-generating function and apply the Chernoff bound.

\subsection{Simplifications for exact computation in structured networks}

If the SSN $G$ is structured, then the simple characterizations in Theorems~\ref{thm:main} and~\ref{thm:statDistSubsets} might allow us to leverage the structure of $G$ in order to simplify the derivation of AoI distribution.

For example, consider the following cascade of two triangles $(v_0,v_1,v_2)$ and $(v_2,v_3,v_4)$, connecting a source $\src=v_0$ and a destination $d=v_4$:

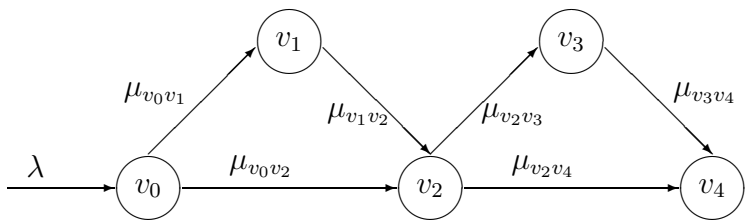
\begin{figure}[H]
  \begin{center}
    \setlength{\unitlength}{1.5cm}
    \begin{picture}(6.75,2)(0,-0.4)
      \put(0,0){\vtx{$\lambda$}{$v_0$}{$\mu_{v_0  v_2}$}}
      \put(1.25,0.3){\vtxudq{$\mu_{v_0 v_1}$}{$v_1$}{$\mu_{v_1  v_2}$}}
      \put(2.5,0){\vtx{}{$v_2$}{$\mu_{v_2  v_4}$}}
      \put(3.75,0.3){\vtxudp{$\mu_{v_2 v_3}$}{$v_3$}{$\mu_{v_3  v_4}$}}
      \put(5,0){\vtxe{}{$v_4$}}
    \end{picture}
    \caption{Two triangles layout.
      \label{fig:TwoTriangles} }
  \end{center}
\end{figure}

Using the formulas of Theorems~\ref{thm:main} and~\ref{thm:statDistSubsets}, and the computations of 
Section~\ref{sec:Triangle}, we can easily get:
\begin{align}
  \E[\Delta_{v_2}] & = \E[\tDelta_{v_2}] =\frac{1}{\lambda} + \frac{\mu_{v_0  v_1} + \mu_{v_1  v_2} + \mu_{v_0  v_2} }{
    \left( \mu_{v_0  v_1} + \mu_{v_0  v_2} \right) \left( \mu_{v_1  v_2} + \mu_{v_0  v_2} \right)}\,, \\
\end{align}
and
\begin{align}
  \E[\Delta_{v_4}] & = \E[\tDelta_{v_4}] = \E[\tDelta_{v_2}]+\E[\min\{S_{v_2v_4},S_{v_2v_3}+S_{v_3v_4}\}] \\
  & = \frac{1}{\lambda} + \frac{\mu_{v_0  v_1} + \mu_{v_1  v_2} + \mu_{v_0  v_2} }{
    \left( \mu_{v_0  v_1} + \mu_{v_0  v_2} \right) \left( \mu_{v_1  v_2} + \mu_{v_0  v_2} \right)}+ \frac{\mu_{v_2  v_3} + \mu_{v_3  v_4} + \mu_{v_2  v_4} }{
    \left( \mu_{v_2  v_3} + \mu_{v_2  v_4} \right) \left( \mu_{v_3  v_4} + \mu_{v_2  v_4} \right)}\,. \\
\end{align}

We basically used the fact that every path from $v_0$ to $v_4$ must pass by $v_2$.

Similarly, if we have a cascade of $n$ triangles $(v_{2i-2},v_{2i-1},v_{2i})_{1\leq i\leq n}$ connected a source $\src = v_0$ and $d=v_{2n}$, we can use the formulas of Theorems~\ref{thm:main} and~\ref{thm:statDistSubsets}, and the computations of 
Section~\ref{sec:Triangle} to get:
\begin{align}
  \E[\Delta_{v_{2n}}] = \E[\tDelta_{v_{2n}}] = \frac{1}{\lambda} + \sum_{i=1}^n \frac{\mu_{v_{2i-2}  v_{2i-1}} + \mu_{v_{2i-1}  v_{2i}} + \mu_{v_{2i-2}  v_{2i}} }{
    \left( \mu_{v_{2i-2}  v_{2i-1}} + \mu_{v_{2i-2}  v_{2i}} \right) \left( \mu_{v_{2i-1}  v_{2i}} + \mu_{v_{2i-2}  v_{2i}} \right)}\,.
\end{align}

More generally, assume that the SSN $G=(V,E)$ satisfies the following properties:
\begin{enumerate}
    \item There exists a size-$n$ cover $\{V_1,\ldots,V_n\}$ of the set of nodes $V$ of the network $G$, i.e., 
\begin{equation}
V=\bigcup_{i=1}^{n} V_i\,.
\end{equation}
\item $\src\in V_1$.
\item $V_i\cap V_{j}\neq\varnothing$ if and only if $j\in\{i,i+1\}$.
\item There exists a universal constant $C$, which does not scale with $n$, such that for every $1\leq i\leq n$, we have
\begin{equation}
|V_i|\leq C\,.
\end{equation}
\item For every edge $(u,v)\in E$, there exists $1\leq i\leq n$ such that $u\in V_i$ and $v\in V_i\cup V_{i+1}$.
\end{enumerate}
For such a network, any path from $\src$ to a vertex $v\in V_i$ must subsequently pass through vertices in $V_1\cap V_2$, $V_2\cap V_3$, \ldots, and $V_{i-1}\cap V_i$. Furthermore, once a path passes from $V_{j-1}\setminus V_j$ to $V_j$, it cannot go back. Based on these observations, it is not hard to see that we can compute the AoI distributions in $n$ stages as follows:
\begin{itemize}
    \item In stage 1, we compute the joint distribution of $(\tDelta_v)_{v\in V_1}$ using the fact that
    \begin{equation}
        \tDelta_v = \min_{P \in \mathcal{P}(\src' \rightarrow v)} \sum_{e \in P} S_e\,,\quad\text{for all }v\in V_1\,.
    \end{equation}
    \item By stage $2\leq i\leq n$, we will have obtained the joint distribution of $(\tDelta_v)_{v\in V_{i-1}}$. So we marginalize and obtain the joint distribution of $(\tDelta_v)_{v\in V_{i-1}\cap V_i}\,$, from which we compute the joint distribution of $(\tDelta_v)_{v\in V_i}$ using the fact that
    \begin{equation}
        \tDelta_v = \min_{u\in V_{i-1}\cap V_i}\left\lbrace\tDelta_u + \min_{P \in \mathcal{P}(u \rightarrow v)} \sum_{e \in P} S_e\right\rbrace \,,\quad\text{for all }v\in V_i\,.
    \end{equation}
\end{itemize}
Using marginalization, we obtain the distribution of $\tDelta_v$, which is the same as that of $\Delta_v$, for every node $v\in V$. Taking expectations, we get the average AoIs.

This procedure will take $n$ elementary computations, which might be better than the naive application of the exponential time procedure of Section~\ref{sec:computingMGF} for general unstructured networks. Note that by an elementary computation, we mean the completion of one stage. Now since $|V_i|$ can be as large as $C$, we can see that the time complexity of one elementary computation can be exponential in $C$. But $C$ is an absolute constant that does not scale with $n$. Hence, if we are interested in the asymptotic growth of the computation time in terms of $n$, we can ignore the dependence on $C$.

The above discussion might give the false impression that the described procedure has an overall time complexity that is linear in $n$. However, this is not necessarily the case because the computation time of the $i$-th stage depends on the length of the representation of the joint distribution of $(\tDelta_v)_{v\in V_{i-1}}$, and this can grow exponentially in $i$. Nevertheless, the proper implementation of the above procedure might be helpful in reducing the computation time in some cases. For example, if $|V_i\cap V_{i+1}|=1$ for all $1\leq i<n$, we can definitely compute the average AoIs $\big(\E[\Delta_v]\big)_{v\in V}$ for all nodes in linear time.

\subsection{Faster and more accurate Monte Carlo simulations}

If we are only interested in estimating the average AoIs through Monte Carlo simulations, Theorem~\ref{thm:main} provides a very simple and effective procedure:
\begin{itemize}
    \item Generate $N$ i.i.d. tuples $\left(S_e^{(i)}\right)_{e\in E'}\,, 1\leq i\leq N$, where $\left(S_e^{(i)}\right)_{e\in E'}$ are mutually independent and $S_e^{(i)}\sim\mathrm{Exp}(\mu_e)$. 
    \item For every $1\leq i\leq N$, use Dijkstra's algorithm to compute
    \begin{equation}
        \tilde{\Delta}_v^{(i)} = \min_{P \in \mathcal{P}(\src' \rightarrow v)} \sum_{e \in P} S_e^{(i)}, \text{ for every } v \in V\,.
    \end{equation}
    This requires $O\big((|E|+|V|)\log|V|\big)$ computations.
    \item For every $v\in V$, we estimate $\Delta_v$ as
    \begin{equation}
        \hDelta_v = \frac{1}{N}\sum_{i=1}^N\tilde{\Delta}_v^{(i)}\,.
    \end{equation}
\end{itemize}
The overall procedure requires $O\big(N(|E|+|V|)\log|V|\big)$ computations. Note that if we are interested in estimating the average AoI for a single node $v\in V$, we do not need to run Dijkstra's algorithm for the entire network: We keep running until hitting $v$, and we generate $S_e^{(i)}\sim\mathrm{Exp}(\mu_e)$ as we go. This will save us some unnecessary computational steps.

Now of course, we could simulate the actual AoI mechanism without the need of Theorem~\ref{thm:main}: We generate $N_p$ events of a $\mathrm{Poisson}(\lambda)$ process, corresponding to $N_p$ packets at the source. Then, we propagate the packets through the network while applying the preemptive mechanism that is described at the beginning of Section~\ref{sec:MainResult}. However, for a target accuracy $\epsilon$ and a target confidence probability $1-\delta$, we expect that such a simulation would require more computations (compared to the one based on Theorem~\ref{thm:main}) since we would need to wait until the age processes converge to their stationary distributions, and since AoIs at close instants of time are very correlated.

\section{Conclusion}

This paper derives a very simple characterization of the stationary distribution of AoI at every node in a network with memoryless service times, where all nodes follow a preemptive policy. The insights that we obtained from this characterization can substantially reduce the computation time for the average AoI. This is true both for exact computations and for Monte Carlo simulations.

One important extension of this work would be to consider multiple-source memoryless networks where status updates are generated at various source nodes, and then forwarded across the network. If each link can transmit only one or a limited number of packets at a time, we need to specify a policy that determines which packet to transmit in case a node has packets from several sources. For multiple-source memoryless networks, there is no single policy that is optimal for all AoIs\footnote{This is in contrast with single-source memoryless networks for which there is an age-optimal policy, namely preemption in service.}: We have tradeoffs between the AoIs corresponding to different sources. It is important to determine the policies that achieve these optimal tradeoffs. Furthermore, it would be nice if one can obtain a recipe to compute (or at least bound) the average AoIs for these policies.

Another possible extension of this work would be to consider single-source networks for which the interarrival time and/or the service times are not memoryless. Some of the techniques that are developed in this paper might be useful, but we expect that further tools are needed in order to completely characterize the stationary distribution and/or the average of AoI for the preemptive policy in such general settings. It is worth noting here that when the service times are not memoryless, the preemptive policy might not be age-optimal. Therefore, it makes sense to consider other transmission policies for these general settings.

\section*{Acknowledgment}
We would like to thank Yunus Inan and Emre Telatar for helpful discussions.

\appendices

\section{Proof of Lemma~\ref{lem:MGF-Relations}}
\label{app:proof-lem-MGF-Relations}

We have
\begin{align}
F\left((s_A)_{A\in\V'}\right) &= \E \left[ \exp\left(\sum_{A\in \V' }s_A \Delta_A\right) \right] \\
& \stackrel{\text{(a)}} = \E \left[ \exp\left(\sum_{A\in \V }s_A \Delta_A\right) \right] \\
& \stackrel{\text{(b)}} = \lim_{\tau\to\infty}\frac{1}{\tau}\int_{0}^{\tau} \exp\left(\sum_{A\in\V}s_A \Delta_A(t)\right) dt\\
&=\lim_{n\to\infty}\frac{1}{B_n}\int_{0}^{B_n} \exp\left(\sum_{A\in\V}s_A \Delta_A(t)\right) dt\\
&=\lim_{n\to\infty}\frac{n}{B_n}\cdot\frac{1}{n}\sum_{i=1}^n\int_{B_{i-1}}^{B_i} \exp\left(\sum_{A\in\V}s_A \Delta_A(t)\right) dt\\
& \stackrel{\text{(c)}} =\frac{1}{\E[T]}\cdot\lim_{n\to\infty}\E\left[\int_{B_{n-1}}^{B_n} \exp\left(\sum_{A\in\V}s_A \Delta_A(t)\right) dt\right]\,, \label{eq:F-1}
\end{align}
where (a) follows from the fact that $\src' \in A$ implies $\Delta_A = 0$, (b) follows from ergodicity, and (c) follows from the (strong) law of large numbers. Note that
\begin{equation} \label{eq:ET}
    \E[T]=\frac{1}{\mu}\,,
\end{equation}
and for every $n\geq 1$, we have
\begin{align}
\E&\left[\int_{B_{n-1}}^{B_n} \exp\left(\sum_{A\in\V}s_A \Delta_A(t)\right) dt\right]\\
&=\E\left[\int_{B_{n-1}}^{B_n} \exp\left(\sum_{A\in\V}s_A (\Delta_A(B_{n-1})+t-B_{n-1})\right) dt\right]\\
&=\E  \left[\exp\left(\sum_{A\in\V}s_A \Delta_A(B_{n-1}) \right) \cdot \int_{0}^{T_n} \exp\left(\sum_{A\in\V}s_A t\right) dt\right]\\
&=\E\left[\frac{\exp\left(\resizebox{0.03\textwidth}{!}{$\displaystyle\sum_{A\in\V}$}~s_A \Delta_A(B_{n-1})\right)}{\resizebox{0.03\textwidth}{!}{$\displaystyle\sum_{A\in\V}$}~s_A} \cdot\left(\exp\left(\sum_{A\in\V}s_AT_{n}\right)-1 \right)\right]\\
&=\frac{\E\left[\exp\left(\resizebox{0.03\textwidth}{!}{$\displaystyle\sum_{A\in\V}$}~s_A \Delta_A(B_{n-1})\right)\right]}{\resizebox{0.03\textwidth}{!}{$\displaystyle\sum_{A\in\V}$}~s_A} \cdot\E\left[\exp\left(\sum_{A\in\V}s_AT_{n}\right)-1 \right]\,, 
\end{align}
where the last equality follows from the fact that $T_n$ is independent from $(\Delta_A(B_{n-1}))_{A\in\V}$. Therefore,
\begin{align}
\E\left[\int_{B_{n-1}}^{B_n} \exp\left(\sum_{A\in\V}s_A \Delta_A(t)\right) dt\right]
&=\frac{\E\left[\exp\left(\resizebox{0.03\textwidth}{!}{$\displaystyle\sum_{A\in\V}$}~s_A \Delta_A(B_{n-1})\right)\right]}{\resizebox{0.03\textwidth}{!}{$\displaystyle\sum_{A\in\V}$}~s_A} \cdot\left(\frac{\mu}{\mu - \resizebox{0.03\textwidth}{!}{$\displaystyle\sum_{A\in\V}$}~s_A} - 1\right)\\
&=\frac{\E\left[\exp\left(\resizebox{0.03\textwidth}{!}{$\displaystyle\sum_{A\in\V}$}~s_A \Delta_A(B_{n-1})\right)\right]}{\mu - \resizebox{0.03\textwidth}{!}{$\displaystyle\sum_{A\in\V}$}~s_A}\,. \label{eq:F-2}
\end{align}
Now from~\eqref{eq:F-1},~\eqref{eq:ET}, and~\eqref{eq:F-2}, we conclude that
\begin{equation}
\label{eq:relationFtof}
F\left((s_A)_{A\in\V'}\right) =\frac{\mu}{\mu - \resizebox{0.03\textwidth}{!}{$\displaystyle\sum_{A\in\V}$}~s_A}\cdot f\left((s_A)_{A\in\V}\right)\,.
\end{equation}

\section{Proof of Lemma~\ref{lem:MGF-Relations-recur}}
\label{app:proof-lem-MGF-Relations-recur}


We first derive a recursive relation that is satisfied by the function $f$ of \eqref{eq:MGF-critical}. Note that for every $n\geq 1$, we have
\begin{align}
&\E\left[\exp\left(\sum_{A\in\V}s_A \Delta_A(B_{n})\right)\right]\\
& = \sum_{e\in E'}\Pr[E_n=e]\cdot \E\left[\exp\left(\sum_{A\in\V}s_A \Delta_A(B_{n})\right)\middle|E_n=e\right] \\
& \stackrel{\text{(a)}}=\frac{\mu_{\src'\src}}{\mu}\cdot \E\left[\exp\left(\sum_{\substack{A\in\V:\\ \src\notin A}}s_A (\Delta_{A}(B_{n-1})+T_{n})\right)\right]\\
&\quad\quad+\sum_{(u,v)\in E}\frac{\mu_{uv}}{\mu}\cdot \E\left[\exp\left(\sum_{\substack{A\in\V:\\ v\notin A}}s_A (\Delta_{A}(B_{n-1})+T_{n}) + \sum_{\substack{A\in\V:\\ v\in A}}s_A (\Delta_{A\cup \{u\}}(B_{n-1})+T_{n})\right)\right]\\
& \stackrel{\text{(b)}}=\resizebox{0.53\textwidth}{!}{$\displaystyle\frac{\mu_{\src'\src}}{\mu}\cdot \E\left[\exp\left(\sum_{\substack{A\in\V:\\ \src\notin A}}s_A \Delta_{A}(B_{n-1})\right)\right]\cdot  \E\left[\exp\left(\sum_{\substack{A\in\V:\\ \src\notin A}}s_AT_{n}\right)\right]$}\\
&\quad+\resizebox{0.83\textwidth}{!}{$\displaystyle\sum_{(u,v)\in E'}\frac{\mu_{uv}}{\mu}\cdot \E\left[\exp\left(\sum_{\substack{A\in\V:\\ v\notin A}}s_A \Delta_{A}(B_{n-1}) + \sum_{\substack{A\in\V:\\ v\in A}}s_A \Delta_{A\cup \{u\}}(B_{n-1})\right)\right]\cdot \E\left[\exp\left(\sum_{\substack{A\in\V}}s_AT_{n}\right)\right]$}\,,
\end{align}
where (a) follows from the update equation~\eqref{eq:updateGivenEnEqualeA} and the fact that $\big((\Delta_A(B_{n-1}))_{A\in \V},T_n\big)$ is independent
from $E_n$, and (b) follows from the fact that $T_n$ is independent from $(\Delta_A(B_{n-1}))_{A\in\V}$. Therefore,
\begin{align}
&\E\left[\exp\left(\sum_{A\in\V}s_A \Delta_A(B_{n})\right)\right]\\
&=\frac{\mu_{\src'\src}}{\mu}\cdot \E\left[\exp\left(\sum_{\substack{A\in\V:\\ \src\notin A}}s_A \Delta_{A}(B_{n-1}) \right)\right]\cdot \frac{\mu}{\mu - \resizebox{0.036\textwidth}{!}{$\displaystyle\sum_{\substack{A\in\V:\\ \src\notin A}}$}~s_A}\\
&\quad+\sum_{(u,v)\in E}\frac{\mu_{uv}}{\mu}\cdot \E\left[\exp\left(\sum_{\substack{A\in\V:\\ v\notin A}}s_A \Delta_{A}(B_{n-1}) + \sum_{\substack{A\in\V:\\ v\in A}}s_A \Delta_{A\cup \{u\}}(B_{n-1})\right)\right]\cdot \frac{\mu}{\mu - \resizebox{0.03\textwidth}{!}{$\displaystyle\sum_{A\in\V}$}~s_A}\,.
\end{align}
By taking the limit as $n\to\infty$, we get
\begin{equation}
\begin{aligned}
  f\left((s_A)_{A\in\V}\right) &=\frac{\mu_{\src'\src}}{\mu - \resizebox{0.036\textwidth}{!}{$\displaystyle\sum_{\substack{A\in\V:\\ \src\notin A}}$}~s_A}\cdot  f\left((s_A)_{\substack{A\in\V:\\\src\notin A}}~,~(0)_{\substack{A\in\V:\\\src\in A}}\right)\\
 &\quad \quad +\sum_{(u,v)\in E}\frac{\mu_{uv}}{\mu - \resizebox{0.03\textwidth}{!}{$\displaystyle\sum_{A\in\V}$}~s_A}\cdot f\left((s_A)_{\substack{A\in\V:\\v\notin A}}~,~(0)_{\substack{A\in\V:\\v\in A,\\u\notin A }}~,~(s_A+s_{A\setminus\{u\}})_{\substack{A\in\V:\\u,v\in A}}\right)\,.
\end{aligned}
  \label{eq:lem-MGF-critical-recur}
  \end{equation}



Now consider an arbitrary $A' \in \V'$. We have:
\begin{itemize}
\item If $\src'\in A'$, then $\Delta_{A'}=0$ and
\begin{align}
F_{A'}\left(s \right) & = F\left((s)_{A=A'}~,~(0)_{\substack{A\in\V':\\A\neq A'}}\right) = \E\left[\exp(s\Delta_{A'})\right]=1\,.
\end{align}
\item If $\src\in A'\in\V$, we would like to show that $\ds F_{A'}\left(s \right)=\frac{\mu_{\src'\src}}{\mu_{\src'\src} -s}$. By defining
\begin{align}
f_{A'}\left(s \right) & := f\left((s)_{A=A'}~,~(0)_{\substack{A\in\V:\\A\neq A'}}\right)\,,
\end{align} 
we can see that in the light of Lemma~\ref{lem:MGF-Relations}, it is sufficient to show that
\begin{align}
f_{A'}\left(s \right) & :=\frac{\mu_{\src'\src}(\mu-s)}{\mu(\mu_{\src'\src} -s)}\,.
\end{align}
We will show this for every $A'$ satisfying $\src\in A'\in\V$ by reverse induction on $|A'|$:
\begin{itemize}
\item If $|A'| = |V|$, i.e., if $A'=V$, then by setting $s_{V}=s$ and $s_A=0$ for all $A\neq V$ in \eqref{eq:lem-MGF-critical-recur}, we get
\begin{align}
f_{V}\left(s \right) &= \frac{\mu_{\src'\src}}{\mu}+\sum_{(u,v)\in E} \frac{\mu_{uv}}{\mu - s} f_{V}\left(s \right),
\end{align} 
where we used the fact that $f\left((0)_{A\in\V}\right)=1$. By rearranging the above equation, we get
\begin{align}
f_{V}\left(s \right) &= \frac{\mu_{\src'\src}(\mu-s)}{\mu(\mu_{\src'\src} -s)}\,.
\end{align} 
\item Let $\src\in A'\in\V$ be such that $|A'| < |V|$, and assume that
\begin{align}
f_{A}\left(s \right) &= \frac{\mu_{\src'\src}(\mu-s)}{\mu(\mu_{\src'\src} -s)}\,,\text{ for all }A\in\V\text{ satisfying }\src\in A\text{ and }|A|>|A'|\,.
\label{eq:ref-induction-f}
\end{align} 
By setting $s_{A'}=s$ and $s_A=0$ for all $A\neq A'$ in \eqref{eq:lem-MGF-critical-recur}, we get
\begin{align}
f_{A'}\left(s \right) &= \frac{\mu_{\src'\src}}{\mu}+\sum_{\substack{(u,v)\in E:\\v\notin A'\text{ or }u\in A'}} \frac{\mu_{uv}}{\mu - s} f_{A'}\left(s \right) + \sum_{\substack{(u,v)\in E:\\v\in A'\text{ and }u\notin A'}} \frac{\mu_{uv}}{\mu - s} f_{A'\cup\{u\}}\left(s \right)\\
&= \frac{\mu_{\src'\src}}{\mu}+\sum_{\substack{(u,v)\in E:\\v\notin A'\text{ or }u\in A'}} \frac{\mu_{uv}}{\mu - s} f_{A'}\left(s \right) + \sum_{\substack{(u,v)\in E:\\v\in A'\text{ and }u\notin A'}} \frac{\mu_{uv}}{\mu - s} \cdot \frac{\mu_{\src'\src}(\mu-s)}{\mu(\mu_{\src'\src} -s)}\,,
\end{align} 
where the first equality follows from the fact that $f\left((0)_{A\in\V}\right)=1$, and the second equation follows from the induction hypothesis \eqref{eq:ref-induction-f}. By rearranging the above equation, we get
\begin{align}
f_{A'}\left(s \right) &= \frac{\mu_{\src'\src}(\mu-s)}{\mu(\mu_{\src'\src} -s)}\,.
\end{align} 
\end{itemize}
We conclude that for every $\src\in A'\in\V$, we have
\begin{align}
f_{A'}\left(s \right) = \frac{\mu_{\src'\src}(\mu-s)}{\mu(\mu_{\src'\src} -s)}\quad\text{and}\quad F_{A'}\left(s \right) = \frac{\mu_{\src'\src}}{\mu_{\src'\src} -s}\,.
\end{align} 
\item If $\src\notin A'\in\V$, then by setting $s_{A'}=s$ and $s_A=0$ for all $A\neq A'$ in~\eqref{eq:lem-MGF-critical-recur}, we get
\begin{align}
f_{A'}\left(s \right) & = f\left((s)_{A=A'}~,~(0)_{\substack{A\in\V:\\A\neq A'}}\right) \\
&= \frac{\mu_{\src'\src}}{\mu - s}  f_{A'}\left(s \right) +\sum_{\substack{(u,v)\in E:\\v\notin A'\text{ or }u\in A'}} \frac{\mu_{uv}}{\mu - s} f_{A'}\left(s \right) + \sum_{\substack{(u,v)\in E:\\v\in A'\text{ and }u\notin A'}} \frac{\mu_{uv}}{\mu - s} f_{A'\cup\{u\}}\left(s \right)\\
&= \sum_{\substack{(u,v)\in E':\\v\notin A'\text{ or }u\in A'}} \frac{\mu_{uv}}{\mu - s} f_{A'}\left(s \right) + \sum_{\substack{(u,v)\in E':\\v\in A'\text{ and }u\notin A'}} \frac{\mu_{uv}}{\mu - s} f_{A'\cup\{u\}}\left(s \right)\,.
\end{align} 
Combining this with Lemma~\ref{lem:MGF-Relations}, we get
\begin{align}
F_{A'}\left(s \right) & = F\left((s)_{A=A'}~,~(0)_{\substack{A\in\V':\\A\neq A'}}\right) \\
&= \sum_{\substack{(u,v)\in E':\\v\notin A'\text{ or }u\in A'}} \frac{\mu_{uv}}{\mu - s} F_{A'}\left(s \right) + \sum_{\substack{(u,v)\in E':\\v\in A'\text{ and }u\notin A'}} \frac{\mu_{uv}}{\mu - s} F_{A'\cup\{u\}}\left(s \right)\,.
\end{align} 
Rearranging the terms of the above equation yields~\eqref{eq:lem-MGF-marg-recur}.
\end{itemize}


\bibliographystyle{IEEEtran}
\bibliography{IEEEabrv,bibliography}

\end{document}